\pdfoutput=1
\documentclass[11pt]{article}

\usepackage[dvipsnames,table]{xcolor}
\usepackage{amssymb,amsmath, amsthm, setspace, color, graphicx}
\usepackage{dsfont}
\usepackage{mathpazo, mathptmx, flexisym}
\usepackage{mathrsfs}            %
\usepackage{fancyhdr}
\usepackage[justification=centering]{caption}
\usepackage{subcaption}
\usepackage[authoryear]{natbib}
\usepackage{url}
\usepackage{breqn}
\usepackage{functan}
\usepackage[margin=1in]{geometry} %
\usepackage[all, cmtip]{xy}
\usepackage[english]{babel}
\usepackage{graphicx}             %
\usepackage{float}
\usepackage[section]{placeins}
\usepackage{lscape} %
\usepackage{mathtools}
\usepackage{calc}     %
\usepackage{enumitem} %
\usepackage[normalem]{ulem} %
\usepackage[titletoc,toc,page]{appendix}%
\usepackage{chngcntr}
\usepackage{etoolbox}
\usepackage{ragged2e}
\definecolor{mred}{rgb}{1.00,0.00,0.50}
\definecolor{mcyan}{rgb}{0.20,0.55,0.75}
\usepackage[pdfborder={0 0 0},final=true,colorlinks=true,breaklinks = true,linkcolor=mred,citecolor=mcyan]{hyperref}
\usepackage[capitalize,nameinlink,noabbrev]{cleveref}
\usepackage{relsize}
\usepackage{bigints}
\numberwithin{equation}{section}
\newcommand{\tab}{\hspace*{2em}}
\theoremstyle{plain}
\newtheorem{thm}{Theorem}[section]
\newtheorem{cor}{Corollary}
\newtheorem{prop}{Proposition}

\newtheorem{assumption}{Assumption}      %

\newtheorem{axiom}{Axiom}[section]
\theoremstyle{definition}
\newtheorem{defn}{Definition}[section]
\newcommand{\mydot}{\mathrel{\mathsmaller{\mathsmaller{\mathsmaller{\mathsmaller{\mathsmaller{\mathsmaller{\bullet}}}}}}}}

\newcommand\reducefonten{\fontsize{10}{9}\selectfont} %
\theoremstyle{remark}
\newtheorem{rem}{Remark}[section]

\providecommand{\mcyan}[1]{\textcolor[rgb]{0.00,0.55,0.75}{#1}}

\newcommand\mycite[1]{\citeauthor{#1}\mcyan{'s}\ (\citeyear{#1})} %

\crefname{thm}{theorem}{theorems}
\Crefname{thm}{Theorem}{Theorems}
\crefname{cor}{corollary}{corollaries}
\Crefname{cor}{Corollary}{Corollaries}
\crefname{prop}{proposition}{propositions}
\Crefname{prop}{Proposition}{Propositions}
\crefname{defn}{definition}{definitions}
\Crefname{defn}{Definition}{Definitions}
\usepackage{titling}
\newcommand{\thesubtitle}{}

\posttitle{%
  \par\large\bfseries\thesubtitle
  \end{center}
  \vskip0.5em
}
\begin{document}
\pagenumbering{gobble}
\singlespacing
\title{\vspace{-8ex}\textbf{Local Expected Utility on Hilbert Spaces:\\Representation, Identification, and an Abstract Wiener Application}\thanks{I have no competing interests or conflicts of interest to declare. I thank Hui-Hsiung Kuo and Peter Hammond for their comments and encouragement on this topic, James Kuelbs for availing me of unpublished material from Kuelbs and Zinn, and Leonard Gross for bringing my attention to the application of abstract Wiener space in mathematical finance. Any errors that remain are my own.}
}

\author{G. Charles-Cadogan\thanks{School of Accounting and Finance, University of Leicester; and Institute for Innovation and Technology Management (IITM), Ted Rogers School of Management, Toronto Metropolitan University; Tel: +44 (0116) 229 7385; e-mail: \textcolor[rgb]{0.00,0.00,1.00}{\href{mailto:gcc13@le.ac.uk}{gcc13@le.ac.uk}}~\\~\\
The author has no competing interests or conflict of interest involving this paper.\\~\\		
}\\Working Paper}
\date{\vspace{-2ex}2 October 2026\vspace{-4ex}} %
\renewcommand\thefootnote{\fnsymbol{footnote}}
\maketitle
\thispagestyle{empty}
\renewcommand\thefootnote{\arabic{footnote}}
\begin{abstract}
\noindent This paper provides axiomatic, identification, and falsifiability foundations for local expected utility on a separable real Hilbert space. Economic acts---including income, consumption, investment, insurance-loss, or experimental-stimulus paths---are represented by square-summable loadings on orthogonal state features. A necessary-and-sufficient theorem characterizes a continuous linear projected-local representation and establishes uniqueness after scale and cone orientation are normalized. With a known stochastic-choice link, the coefficient sequence is identified when the information operator is injective on the closed span of observed act differences. These results require Hilbert geometry, not a Gaussian reference measure. An abstract Wiener application adds a Banach space of paths, a Cameron--Martin Hilbert subspace, and a Gaussian benchmark measure. Its raw Wiener-integral weights can be negative. For finite projected choice comparisons, decision weights are constructed by replacing negative weights with zero and dividing by the sum of the positive parts, provided that sum is positive. These normalized weights multiply local utility coefficients and act loadings in the projected value; they are distinct from probabilities of choosing acts. Conditional on an independently elicited or normalized local utility scale, decision weights can be recovered from observed choices. The finite projected model is rejected if the recovered weights imply violations of payoff monotonicity on the maintained ordered act class. Separately, out-of-sample tests assess whether additional coordinates improve choice prediction; a lack of improvement does not alone reject the broader representation. A Luce-choice Monte Carlo study with 4,000 estimations and approximately 66.8 million simulated menus shows near-nominal coverage, declining coefficient-recovery error, and no monotonicity violations after admissible normalization.
\\~\\
Keywords: decision theory, local utility, Hilbert-space representation, identification, abstract Wiener spaces
\\~\\
JEL Classification: C02, D81
\end{abstract}
\clearpage
\pagenumbering{roman}
\setcounter{page}{1}
\begingroup
\hypersetup{linkcolor=mred}
\color{mred}
\tableofcontents
\phantomsection
\addcontentsline{toc}{section}{List of Figures}
\listoffigures
\phantomsection
\addcontentsline{toc}{section}{List of Tables}
\listoftables
\endgroup
\clearpage
\pagenumbering{arabic}
\setcounter{page}{1}
\onehalfspacing
\large
\large
\renewcommand\thefootnote{\fnsymbol{footnote}}
\renewcommand\thefootnote{\arabic{footnote}}
\onehalfspacing
\vspace*{-0.75cm}\section{Introduction}\label{sec:Introduction}
Many economically important choices concern an uncertain path rather than a
finite list of terminal prizes. Households evaluate prospective income and
consumption streams; investors evaluate sequences of portfolio payoffs;
insurers and policyholders face loss processes; and experimental subjects may
respond to streams of risky stimuli. In each case, timing, persistence, and
the ordering of shocks can affect welfare even when terminal outcomes
coincide. This paper provides axiomatic and identification foundations for
local expected utility on a separable real Hilbert space and then specializes
the general theory to stochastic path acts in abstract Wiener space.

\tab The first contribution is general. For acts in a convex subset of a
separable real Hilbert space, the paper gives necessary and sufficient
conditions for a continuous linear projected-local representation, proves
uniqueness after normalization, and identifies its coefficient sequence from
repeated stochastic choices when the induced information operator is
injective on the empirically relevant subspace. Thus the economic payoff is
not merely another path-space construction: the results specify when
infinite-dimensional preferences can be recovered from finite projections and
which variation in observed menus identifies the limiting functional.

\tab The second contribution is an application. An abstract Wiener triple
adds a Banach space of paths, its Cameron--Martin Hilbert subspace, and a
Gaussian reference measure. Hermite coordinates then give the general
coefficients a Wiener-integral interpretation. Those raw coefficients are
signed stochastic functionals, not automatically probabilities; admissible
decision weights arise only after an explicit nonnegative normalization.
Finite-dimensional projections convert both layers of the theory into
elicitation and specification tests. A Luce-choice Monte Carlo with 4,000
estimations and approximately 66.8 million simulated menus finds near-nominal
coverage, declining coefficient-recovery error, and no monotonicity violations
after admissible normalization.

\tab The paper is complementary to \citet{LuSaito2026RepeatedChoice}, who
derive ergodic choice frequencies from a dynamic Markov utility process and
characterize that process with four behavioral axioms. Their analysis
identifies a dynamic random-utility process from repeated-choice frequencies.
The present paper addresses a different object: it characterizes the
continuous linear projected-local functional on a Hilbert space of path acts
and identifies its coefficient sequence from a maintained stochastic-choice
link. Neither result substitutes for the other. Lu and Saito do not impose the
abstract-Wiener path structure or separate signed from admissible weights; the
present paper does not derive an ergodic Markov law. A natural extension, left
for future work, is to impose repeated-choice stationarity conditions on
path-valued acts and ask when the induced ergodic frequencies identify the
admissible normalization of the Wiener weights.

\tab For a normalized state-feature coordinate amplitude $s_{1,n}$, we
interpret $P_n(A)=\int_A|s_{1,n}(x)|^2dx$ as the probability assigned to the
outcome event $A$. The squared-amplitude map is Born-rule-type in form but not
in derivation: it is an interpretive assumption about how state-feature
amplitudes translate into outcome densities, not a theorem
\citep[cf.][]{CharlesCadogan2018}. This outcome-density map is distinct from
the positive-part normalization $p_j=v_j^+/\sum_kv_k^+$ used to transform
signed Wiener weights into admissible decision weights.

\tab The path-space formulation is useful when a finite-state representation
would suppress economically relevant information about timing, persistence,
or ordering. A finite lottery remains appropriate when primitive uncertainty
consists of a small set of consequences with stable additive probabilities.
The richer construction is warranted when the act itself is function-valued:
a Banach path space describes the act, a Hilbert subspace supplies tractable
local coordinates, and the Gaussian reference measure supplies a benchmark
law for local stochastic perturbations. That reference measure is not an
elicited belief; it is a modelling benchmark, analogous to Brownian
perturbations in continuous-time finance, macroeconomics, and stochastic
control. The framework is useful when the analyst wants a local representation
of preferences over stochastic paths while allowing state dependence,
nonlinear weighting, or rank-dependent restrictions. It is not meant to
replace ordinary SEU in small finite-state problems
\citep{Savage1972,AnscombeAumann1963,Kreps1988}.

\tab The generic Hilbert-space formulation connects the paper to the classical
economics of continuous utility and infinite-dimensional commodity and claim
spaces. Continuous numerical representation begins with \citet{Debreu1954};
\citet{Grandmont1972} studies continuity for von Neumann--Morgenstern utility;
\citet{Chichilnisky1980} develops a continuous representation of preferences
using a topology defined directly on the preference space; and
\citet{Bewley1972} shows why topology and preference continuity are
economically consequential when commodities are infinite dimensional.
Continuous-time consumption and asset-pricing models likewise treat streams
or contingent claims as function-valued objects
\citep{DuffieZame1989,HindyHuang1992,BackPliska1991}. The present result does
not reproduce their equilibrium or no-arbitrage theorems. It characterizes a
projected-local preference functional and states the design conditions under
which its coefficient sequence is identified. The Gaussian and Wiener
structure enters only in the subsequent application.

\tab This interpretation clarifies what can be tested. Once a finite basis is
chosen, choices over projected acts identify local utility coefficients and
normalized decision weights up to the usual location and scale
normalizations. Estimated weights must be nonnegative and add to one after
normalization; RDU and CPT impose rank-order restrictions; and the signed
Wiener representation can be rejected if the recovered positive-part weights
fail monotonicity or if additional basis terms do not improve out-of-sample
choice fit. \Cref{thm:FiniteProjectedElicitation} formalizes the finite
projected representation and elicitation argument, while
\Cref{cor:SEURDUCPTRestrictions} gives the corresponding SEU, RDU, and CPT
overidentifying restrictions. The numerical section is not evidence for a
universal preference law; it illustrates how the representation can be
elicited in a laboratory or estimated from repeated choices over stochastic
paths.

\section{Hilbert-Space Theory and Wiener Specialization}\label{sec:TheModel}
This section first states the representation and identification theory on a generic separable real Hilbert space. It then specializes that theory to abstract Wiener space and extends the local preference functional of \citet{Machina1982} to path-valued acts. The Hilbert-space results do not require a Gaussian measure; the Wiener application supplies the path space, Gaussian benchmark, Hermite coordinates, and signed stochastic weights used in \cref{sec:Applications}.

\begin{defn}[Decision-theoretic primitives]\label{defn:DecisionPrimitives}
   Let $\mathcal H$ be a separable real Hilbert space with orthonormal basis $\{e_j\}_{j\geq1}$. An \emph{act} $f_a$ is represented by its loading vector $a=\sum_{j\geq1}a_je_j\in\mathcal A\subset\mathcal H$. The basis elements are state features or coordinates of the economic act, not objects of preference. A projected-local representation evaluates $f_a$ through a continuous linear functional $V(a)=\langle\theta,a\rangle_{\mathcal H}$. In the abstract Wiener application below, $\mathcal H$ becomes a Cameron--Martin space, $e_j$ becomes the Hermite coordinate $s_{1j}$, and the raw Wiener coefficient $v_j$ is a signed local change-of-measure coefficient. An admissible decision weight is then a normalized nonnegative transform, for example $p_j=v_j^+/\sum_k v_k^+$ on the event $\sum_kv_k^+>0$.\qed
\end{defn}

\tab Preferences are defined over acts, not over basis elements or states themselves. Thus a statement such as ``the decision maker prefers feature $e_j$ to $e_k$'' is shorthand for a comparison of acts whose payoff variation loads more heavily on $e_j$ than on $e_k$. This convention preserves the distinction between states and objects of choice while allowing basis features to affect local valuation. The stochastic interpretation of those coordinates is additional structure supplied by the Wiener application.

\subsection{Generic Hilbert-space axiomatization and identification}
\label{subsec:InfiniteAxiomatization}

\tab We now state a complete characterization for the continuous linear
projected-local class. Let \(\mathcal H\) be the real separable Hilbert space
with orthonormal basis \(\{e_j\}_{j\geq1}\), let
\(\mathcal A\subset\mathcal H\) be convex with nonempty relative interior and
contain the zero act, and write \(f_a\) for the act with loading
\(a=(a_1,a_2,\ldots)\in\mathcal A\). Let \(P_J\) denote orthogonal projection
onto the first \(J\) coordinates. Let \(K\subset\mathcal H\) be a closed convex
cone of loading changes interpreted as monetary improvements.

\begin{axiom}[Hilbert weak order]\label{ax:HilbertWeakOrder}
The relation \(\succeq\) on \(\mathcal A\) is complete, transitive, and
nontrivial.
\end{axiom}

\begin{axiom}[Mixture independence]\label{ax:HilbertIndependence}
For all \(a,b,c\in\mathcal A\) and \(\alpha\in(0,1)\), whenever the mixtures
belong to \(\mathcal A\),
\[
 f_a\succeq f_b
 \quad\Longleftrightarrow\quad
 f_{\alpha a+(1-\alpha)c}\succeq
 f_{\alpha b+(1-\alpha)c}.
\]
\end{axiom}

\begin{axiom}[Hilbert continuity]\label{ax:HilbertContinuity}
For every \(a\in\mathcal A\), the upper and lower contour sets are closed in
the \(\mathcal H\)-norm topology.
\end{axiom}

\begin{axiom}[Projective consistency]\label{ax:ProjectiveConsistency}
For every \(J<M\), the ranking of acts in \(P_J\mathcal A\) is unchanged when
those acts are regarded as elements of \(P_M\mathcal A\). Moreover, if
\(P_Ja,P_Jb\in\mathcal A\) for all sufficiently large \(J\), then
\[
 P_Ja\succeq P_Jb\ \text{eventually and }P_Ja\to a,\ P_Jb\to b
 \quad\Longrightarrow\quad f_a\succeq f_b.
\]
The same implication holds with strict preference whenever the limiting value
difference is bounded away from zero.
\end{axiom}

\begin{axiom}[Cone monotonicity]\label{ax:ConeMonotonicity}
If \(a-b\in K\), then \(f_a\succeq f_b\); if \(a-b\) belongs to the relative
interior of \(K\), then \(f_a\succ f_b\).
\end{axiom}

\begin{thm}[Hilbert-space projected-local representation]
\label{thm:InfiniteProjectedRepresentation}
Axioms~\ref{ax:HilbertWeakOrder}--\ref{ax:ConeMonotonicity} hold if and only if there is a
nonzero coefficient sequence \(\theta\in K^*\subset\mathcal H\), where
\(K^*=\{h:\langle h,k\rangle_{\mathcal H}\geq0\text{ for all }k\in K\}\),
with \(\langle\theta,k\rangle_{\mathcal H}>0\) for every
\(k\in\operatorname{ri}(K)\setminus\{0\}\), where
\(\operatorname{ri}(K)\) denotes the relative interior of the improvement cone,
such that
\[
 f_a\succeq f_b
 \quad\Longleftrightarrow\quad
 V(a)=\langle\theta,a\rangle_{\mathcal H}
 \geq
 \langle\theta,b\rangle_{\mathcal H}=V(b).
\]
Equivalently,
\[
 V(a)=\sum_{j=1}^{\infty}\theta_j a_j,
 \qquad \sum_{j=1}^{\infty}\theta_j^2<\infty,
\]
with convergence uniform on norm-bounded sets after projection error is
controlled by \(\|a-P_Ja\|_{\mathcal H}\). The representing coefficient is
unique up to a positive scalar. Under \(\|\theta\|_{\mathcal H}=1\) and the
orientation imposed by \(K\), it is unique. \qed
\end{thm}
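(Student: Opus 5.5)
We prove the two directions separately, with sufficiency as the substantive part. \emph{Necessity} is a direct check. If \(V(a)=\langle\theta,a\rangle_{\mathcal H}\) with \(\theta\neq0\), then \(\succeq\) is a complete and transitive weak order. Nontriviality holds because \(\mathcal A\) has nonempty relative interior and \(\theta\) is not orthogonal to the span of \(\mathcal A-\mathcal A\); this is implicit in cone monotonicity whenever \(\operatorname{ri}(K)\) meets that span. Mixture independence follows from linearity: \(V(\alpha a+(1-\alpha)c)-V(\alpha b+(1-\alpha)c)=\alpha(V(a)-V(b))\). Norm-closedness of the contour sets follows from continuity of \(V\). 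For projective consistency, \(V(P_Ja)\to V(a)\) by the bound \(|V(a)-V(P_Ja)|\le\|\theta\|\,\|a-P_Ja\|\), which also gives the stated uniform control on bounded sets. Cone monotonicity is exactly \(\theta\in K^*\) together with strict positivity on \(\operatorname{ri}(K)\setminus\{0\}\).

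\emph{Sufficiency.} The plan is a von Neumann--Morgenstern/Herstein--Milnor argument on the convex set \(\mathcal A\), followed by a Hahn--Banach/Riesz extension.

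\textbf{Step 1.} Weak order, mixture independence and continuity (the latter restricted to mixture segments, which inherit closedness from norm-closedness) give the mixture-space theorem. It yields a mixture-affine utility \(u:\mathcal A\to\mathbb R\) representing \(\succeq\), unique up to positive affine transformation. Normalize \(u(0)=0\), using that \(0\in\mathcal A\).

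\textbf{Step 2.} Extend \(u\) to a linear functional \(\ell\) on \(\operatorname{span}(\mathcal A)\). Affinity on a convex set containing \(0\) gives positive homogeneity \(u(ta)=t\,u(a)\) for \(t\in[0,1]\); we set \(\ell(\sum_i t_ia_i)=\sum_i t_iu(a_i)\) and check that this is well defined by the usual cancellation argument using convex combinations.

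\textbf{Step 3 (main obstacle).} Show that \(\ell\) is bounded in norm. Closed contour sets give only that \(u\) has closed level structure, not continuity outright. I would argue as follows. Each set \(\{u\ge c\}\) and \(\{u\le c\}\) is closed and convex, so \(u\) is both upper and lower semicontinuous on \(\mathcal A\), hence continuous there. Pick a relative interior point \(a_0\) and a ball \(B(a_0,r)\cap\operatorname{aff}\mathcal A\subset\mathcal A\). A continuous affine function on a relative neighbourhood is bounded near \(a_0\), because one can restrict to a compact neighbourhood in finite dimensions and in general invoke a Baire category argument on the closed sets \(\{|u|\le n\}\cap \bar B\). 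Boundedness near a point then gives boundedness of \(\ell\) on the span of the directions \(\mathcal A-a_0\). If needed, projective consistency supplies an alternative route. On each \(P_J\mathcal A\) the functional is finite-dimensional linear, hence of the form \(\sum_{j\le J}\theta_ja_j\), with coefficients \(\theta_j\) consistent across \(J\) by the first clause. The limit clause, combined with uniform boundedness over the bounded projection family, forces \(\sum_j\theta_j^2<\infty\); otherwise one constructs \(a\) with \(P_Ja\to a\) but \(V(P_Ja)\) unbounded, contradicting the limit implication.

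\textbf{Step 4.} Extend \(\ell\) from the closed span to \(\mathcal H\) by Hahn--Banach, or simply by setting it to zero on the orthogonal complement. Riesz representation then gives \(\theta\in\mathcal H\), and \(\theta\neq0\) by nontriviality. Cone monotonicity applied to \(a-b\in K\) gives \(\langle\theta,k\rangle\ge0\), so \(\theta\in K^*\), together with strictness on \(\operatorname{ri}(K)\setminus\{0\}\).

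\textbf{Step 5 (uniqueness).} Any two representations \(\theta,\theta'\) satisfy \(\langle\theta',\cdot\rangle=\alpha\langle\theta,\cdot\rangle+\beta\) on \(\mathcal A\) by mixture-space uniqueness. Evaluating at \(0\) gives \(\beta=0\), and \(\alpha>0\) is pinned by the orientation imposed by \(K\). Hence \(\theta'=\alpha\theta\) on the relevant span, with uniqueness understood modulo the orthogonal complement of \(\operatorname{span}(\mathcal A)\). Imposing \(\|\theta\|=1\) fixes \(\alpha=1\).
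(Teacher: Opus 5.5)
Your route is the same as the paper's: mixture-space representation, normalization at the zero act, continuity of the linear part, Riesz, then the dual cone. Your Step~3 actually supplies what the paper only asserts. Closed contour sets make the affine $u$ continuous on $\mathcal A$, and continuity at a relative interior point $a_0$ bounds $\ell$ on a relative ball, hence on $\operatorname{span}\mathcal A$. Two asides in Step~3 should go. The Baire remark is unnecessary. The projective-consistency ``alternative route'' fails as sketched: $P_J\mathcal A$ need not lie in $\mathcal A$, and the limit clause is ordinal, so unbounded values of $V(P_Ja)$ contradict nothing.

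The genuine gap is in Step~4. Cone monotonicity only constrains pairs $a,b\in\mathcal A$, so it gives $\langle\theta,k\rangle\ge0$ only for $k\in K\cap\operatorname{span}\mathcal A$. The component of $\theta$ orthogonal to $\operatorname{span}\mathcal A$ is exactly the free part you flag in Step~5. You do not show that either the zero extension or a Hahn--Banach extension lands in $K^*$, and in general none does. Let $\mathcal A=\{xe_1+ye_2+xe_3:x^2+y^2\le1\}$, ranked by $y$, and let $K=\{k\in\operatorname{span}\{e_1,e_2,e_3\}:k_3\ge(k_1^2+k_2^2)^{1/2}\}$. Then $(\mathcal A-\mathcal A)\cap K\subseteq\{t(e_1+e_3):t\ge0\}$, on which acts are indifferent, and $(\mathcal A-\mathcal A)\cap\operatorname{ri}(K)=\emptyset$. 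So all five axioms hold; projective consistency is trivial because $P_Ja=a$ for $J\ge3$. A representing $\theta$ must satisfy $\theta_1+\theta_3=0$ and $\theta_2>0$. But $\theta\in K^*$ requires $\theta_3\ge(\theta_1^2+\theta_2^2)^{1/2}=(\theta_3^2+\theta_2^2)^{1/2}$, which forces $\theta_2=0$. So sufficiency fails as stated.

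The paper's proof makes the same jump (``cone monotonicity implies $\langle\theta,k\rangle\geq0$ for every $k\in K$''). The defect therefore lies in the statement's hypotheses, not only in your write-up. The natural repair is to assume $\mathcal A$ has nonempty interior in $\mathcal H$, or at least $K\subseteq\operatorname{span}\mathcal A$. Then for $k\in K$ you get $a_0+tk\in\mathcal A$ for small $t>0$, so cone monotonicity yields $t\langle\theta,k\rangle\ge0$. The same device gives strictness on $\operatorname{ri}(K)\setminus\{0\}$.

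Under nonempty interior, $\operatorname{span}\mathcal A=\mathcal H$, so your ``modulo the orthogonal complement'' caveat disappears and the unit-norm uniqueness is genuine. Otherwise $\|\theta\|=1$ pins $\theta$ down only after you restrict to its component in $\overline{\operatorname{span}}\,\mathcal A$. Your necessity caveat, that nontriviality needs $\theta\notin(\operatorname{span}\mathcal A)^{\perp}$, is also correct, and the paper's one-line necessity argument skips it.
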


\begin{proof}
Necessity follows immediately from a nonzero \(\theta\in K^*\): the inner
product defines a nontrivial continuous affine functional, its restrictions to
the nested projection spaces are consistent, and membership of \(\theta\) in
the dual cone gives monotonicity.

For sufficiency, mixture independence, the weak-order axioms, and continuity
give an affine representation on the convex act domain. Normalize its value at
the zero act to zero. Its linear part is continuous in the Hilbert norm;
therefore, by the Riesz representation theorem, there is a unique
\(\theta\in\mathcal H\) such that \(V(a)=\langle\theta,a\rangle_{\mathcal H}\).
Nontriviality implies \(\theta\neq0\). Projective consistency makes the
finite-dimensional representing vectors the coordinates
\((\theta_1,\ldots,\theta_J)\) of this same element, rather than unrelated
representations, and Hilbert continuity extends their values from the dense
union of projection spaces to \(\mathcal A\). Cone monotonicity implies
\(\langle\theta,k\rangle\geq0\) for every \(k\in K\), hence
\(\theta\in K^*\), while strict monotonicity gives
\(\langle\theta,k\rangle>0\) on
\(\operatorname{ri}(K)\setminus\{0\}\). Two continuous linear functionals representing the same
nontrivial mixture order are positive scalar multiples; unit-norm and cone
orientation remove that scalar ambiguity. Finally, Cauchy--Schwarz gives
\[
 |V(a)-V(P_Ja)|
 \leq \min\!\left\{\|\theta\|_{\mathcal H}\|a-P_Ja\|_{\mathcal H},
 \|\theta-P_J\theta\|_{\mathcal H}\|a\|_{\mathcal H}\right\},
\]
which proves pointwise convergence and uniform convergence on norm-bounded
subsets of \(\mathcal A\).
\end{proof}

\begin{thm}[Identification from stochastic repeated choices]
\label{thm:InfiniteIdentification}
Suppose binary menu observations satisfy
\[
 \Pr(f_a\text{ chosen over }f_b\mid a,b)
 =G\!\left(\langle\beta,a-b\rangle_{\mathcal H}\right),
 \qquad \beta=\gamma\theta,\quad\gamma>0,
\]
where \(G\) is known, continuous, and strictly increasing. Let \(D=a-b\), and
define the maintained parameter space
\[
 \mathcal H_D=\overline{\operatorname{span}}\bigl(\operatorname{supp}D\bigr)
 \subseteq\mathcal H,
\]
and restrict \(\beta\in\mathcal H_D\). Suppose the design operator
\[
 Qh=\mathbb E[\langle h,D\rangle_{
 \mathcal H}D]
\]
is well defined and injective on \(\mathcal H_D\). Then
\(\beta\) is point identified from the population choice probabilities. If
\(\|\theta\|_{\mathcal H}=1\), then
\(\gamma=\|\beta\|_{\mathcal H}\) and
\(\theta=\beta/\|\beta\|_{\mathcal H}\) are separately identified.

If \(\theta_j=p_ju_j\), the normalized weights \(p_j\) are separately
identified when the local utility sequence \(u_j\) is externally elicited,
\(u_j\neq0\) on active coordinates, and
\(0<\sum_j\theta_j/u_j<\infty\); in that case
\[
 p_j=
 \frac{\theta_j/u_j}{\sum_{k=1}^{\infty}\theta_k/u_k}.
\]
Without an external normalization of \(u\), choice data identify only the
composite sequence \(\theta\): for any positive sequence \(c_j\) preserving
summability, \((p_jc_j,u_j/c_j)\) generates the same composite coefficients
after renormalization. \qed
\end{thm}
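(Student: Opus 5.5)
The plan is to invert the link function first and then use injectivity of the design operator. Because \(G\) is known, continuous, and strictly increasing, it has a continuous inverse on its range. So for every pair \((a,b)\) in the support of the design, the population choice probability determines the index \(\langle\beta,D\rangle_{\mathcal H}=G^{-1}\bigl(\Pr(f_a\text{ chosen over }f_b\mid a,b)\bigr)\). This means the map \(d\mapsto\langle\beta,d\rangle\) is known for every \(d\in\operatorname{supp}D\).

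Suppose \(\beta\) and \(\tilde\beta\), both in \(\mathcal H_D\), generate the same choice probabilities. Then \(h=\beta-\tilde\beta\) satisfies \(\langle h,d\rangle=0\) for almost every realization \(d\) of \(D\), and hence \(Qh=\mathbb E[\langle h,D\rangle D]=0\). This expectation is a Bochner integral, which is well defined by hypothesis. Injectivity of \(Q\) on \(\mathcal H_D\) then gives \(h=0\), so \(\beta\) is point identified.

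I would also note the direct route, which explains why the parameter space is restricted to \(\mathcal H_D\). If \(h\) is orthogonal to \(\operatorname{supp}D\), it is orthogonal to the closed span by continuity and linearity. Since \(h\) also lies in that span, \(h=0\). The main technical point is passing from "almost surely" to "on the support." I would handle it by noting that \(d\mapsto\langle h,d\rangle\) is continuous, so it vanishes on the support whenever it vanishes almost surely. This step is what makes the injectivity hypothesis essentially automatic, and I would remark on that.

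Scale separation is immediate once \(\beta\) is known. Since \(\beta=\gamma\theta\) with \(\gamma>0\) and \(\|\theta\|=1\), we get \(\|\beta\|=\gamma\). This is nonzero because \(\theta\neq0\) by \Cref{thm:InfiniteProjectedRepresentation}, so \(\theta=\beta/\|\beta\|\).

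For the weights, knowing \(\theta\) and an externally elicited \(u\) with \(u_j\neq0\) gives \(p_j\propto\theta_j/u_j\). The normalization \(\sum_jp_j=1\), together with the stated condition \(0<\sum_k\theta_k/u_k<\infty\), fixes the constant and yields the displayed formula. For the non-identification claim, I would exhibit the observational equivalence directly: replacing \((p_j,u_j)\) by \((p_jc_j,u_j/c_j)\) leaves every product \(p_ju_j=\theta_j\) unchanged. The choice probabilities depend on \((p,u)\) only through \(\theta\), so the likelihood is unchanged. Renormalizing the new \(p\) to sum to one only rescales \(\theta\), and that scale is absorbed into \(\gamma\). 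The condition that \(c\) preserve summability keeps the new \(p\) admissible. The only care needed is to treat \(\gamma\) as unknown, since a common rescaling of \(\theta\) is not detectable without the unit-norm normalization.
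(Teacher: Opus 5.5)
Your proposal is correct and follows the paper's proof. Both invert the strictly increasing link to recover the index, show that $h=\beta-\widetilde\beta$ satisfies $\langle h,D\rangle=0$ almost surely, invoke injectivity of $Q$, separate scale through the unit norm, and read off the weight ratio and the $(p_jc_j,u_j/c_j)$ non-identification. The differences are small but useful. You get $Qh=0$ directly, whereas the paper's quadratic form $\langle h,Qh\rangle=\mathbb E[\langle h,D\rangle^2]=0$ implicitly needs positivity and symmetry of $Q$ to reach $Qh=0$. Your closed-zero-set and support argument also correctly shows that, once $\beta$ is restricted to $\mathcal H_D$, the injectivity hypothesis is redundant.
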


\begin{proof}
Strict monotonicity of \(G\) identifies
\(\langle\beta,D\rangle_{\mathcal H}\) for every design difference in the
support. If \(\beta\) and \(\widetilde\beta\) generate the same probabilities,
then \(\langle\beta-\widetilde\beta,D\rangle=0\) almost surely. Consequently,
\[
 \langle\beta-\widetilde\beta,
 Q(\beta-\widetilde\beta)\rangle
 =\mathbb E[\langle\beta-\widetilde\beta,D\rangle^2]=0.
\]
Injectivity of \(Q\) implies \(\beta=\widetilde\beta\). Unit-norm
normalization then separates scale from direction. The displayed ratio follows
from \(\theta_j=p_ju_j\) and \(\sum_jp_j=1\). The final transformation proves
the nonidentification statement.
\end{proof}

\begin{rem}[Relation to Lu--Saito]\label{rem:LuSaitoComparison}
\citet{LuSaito2026RepeatedChoice} derive ergodic choice frequencies from a
dynamic Markov utility process and characterize that process by four behavioral
axioms. \Cref{thm:InfiniteProjectedRepresentation} instead characterizes the
continuous linear local-value functional on a Hilbert space of acts, while
\Cref{thm:InfiniteIdentification} identifies its coefficient sequence from a
maintained stochastic-choice link. Neither contribution substitutes for the
other: Lu and Saito do not impose the abstract-Wiener path structure or
distinguish signed from admissible weights, whereas the present paper does not
derive an ergodic Markov law.
\end{rem}

\subsection{Abstract Wiener-space application}\label{subsec:ExtendMachina1982}
\tab The preceding representation and identification theorems are Hilbert-space results: neither invokes a Gaussian measure, a Cameron--Martin embedding, nor Hermite coordinates. The present subsection adds those structures. Their role is load-bearing for the stochastic path interpretation, the Wiener-integral coefficient construction, and the distinction between signed analytical weights and admissible decision weights, but not for the Riesz representation itself.

\citet[pg.~293]{Machina1982} introduced a preference functional $V(F)$ on a choice set
\begin{equation}
   D[0,M] = \{F|\;F:[0,M]\rightarrow[0,1],\quad F(x)=\text{Pr}\{X\leq x\},\;x\in [0,M]\}
\end{equation}
where the domain of $V$ is the space of Lebesgue integrable functions $L[0,M]$ and $F$ is a distribution function on $X$. Since distribution functions are bounded, we regard the choice set as a subset of the bounded measurable functions endowed with the sup norm $\|\cdot\|_\infty$, denoted here by $L_B[0,M]$. In particular, $D[0,M]\subset L_B[0,M]$. We assume that $V(F)$ is not of bounded variation.\footnote{Technically, if $V(F)$ is of bounded variation, then the Banach space is not separable \citep[p.~421]{Adams1936}. \citet[pp.~76-77]{CarmonaTehranchi2006} address issues arising from measure on nonseparable spaces.  Besides, a separable Banach space can be embedded in a space of continuous \citep[p.~50,~Banach-Mazur~Thm]{BessagaPelzynski1975} but nowhere differentiable functions \citep{Wiener1923,RodriguezPiazza1995}--the subject matter of this paper.}

\tab For convenience, we normalize the range $[0,M]$ so that $D[0,M]$ is now on the unit interval $D[0,1]$. Cf. \citet{AlbeverioMastrogiacomo2022}. Instead of distribution functions $F$ defined only on the Lebesgue integrable space $L[0,1]$, we consider the square-integrable subspace $D[0,1]\cap L^2[0,1]\subset L[0,1]$. The complex-valued functions defined on $L^2\bigl([0,1],\mu\bigr)$ with respect to a measure $\mu$, and inner product $<f,g>_H=\int f\bar{g}\mu(dx)$ form a Hilbert space \citep[p.~40]{ReedSimon1980}. Without loss of generality, in this paper all functions are real valued. Hahn-Banach extension theorem \citep[pg.~75]{ReedSimon1980} extends continuous linear functionals, but it does not turn $L[0,1]$ itself into a Hilbert space. Thus we use a Hilbert subspace, denoted $\widetilde{L}^2_H[0,1]$, that is continuously and densely embedded in the Banach space $L_B[0,1]$. Let
\begin{align}
   &\iota:\widetilde{L}^2_H[0,1]\hookrightarrow L_B[0,1]\\
   \intertext{be an inclusion map such that $\iota\left(\widetilde{L}^2_H[0,1]\right)$ is dense in the Banach space $L_B[0,1]$. Then}
   &\bigl(\iota,\;\widetilde{L}^2_H[0,1],\;L_B[0,1]\bigr)
\end{align}
is the embedding triple used to construct the abstract Wiener space,\footnote{See \cref{thm:AbstractWinerSpace}, \emph{infra}. Without a measurable Hilbert norm and an associated Gaussian measure, an inclusion $\bigl(\iota,\;L^2[0,1],\;L[0,1]\bigr)$ is not by itself an abstract Wiener space \citep[p.~86]{Kuo1975}.} with Gaussian measure $P_\star$ on the Banach space. See e.g. \citet[Lemma~3 and eq.~(6.5), p.~393]{CameronMartin1944}, \citet[Cor.~1, p.~38]{Gross1967}, and \citet[pp.~225--226]{Nualart2006}. More formally:
\onehalfspacing\begin{defn}[Abstract Wiener space]\label{defn:AbstractWienerSpace}\cite{CameronMartin1944,Gross1967}.\\
  Let $\Omega$ be a separable Banach space, $P_\star$ be a probability measure on $\Omega$, and $\mathcal{F}$ be the $\sigma$-field of Borel measurable subsets of $\Omega$. There exist a separable Hilbert space $\mathfrak{H}$ that is continuously and densely embedded in $\Omega$ with inclusion map $\iota:\mathfrak{H}\rightarrow \Omega$ and such that
  \begin{equation}
     \displaystyle\int_\Omega e^{i<x,y>}P_\star(dx)=e^{-\frac{1}{2}\|y\|^2_{\mathfrak{H}}}
  \end{equation}
  for any $y\in\Omega^\star\subset \mathfrak{H}^\star$ where $\Omega^\star$ and $\mathfrak{H}^\star$ are dual spaces of $\Omega$ and $\mathfrak{H}$ respectively. The triple $(\Omega,\mathfrak{H},P_\star)$ is called an abstract Wiener space.\qed
\end{defn}
\begin{thm}[Abstract Wiener space over Banach space]\citep[Thm~4.4,~p. 79]{Kuo1975}\label{thm:AbstractWinerSpace}
  Let $\mathfrak{B}$ be a real separable Banach space. Then there exists a separable Hilbert space $\mathfrak{H}$ densely embedded in $\mathfrak{B}$ such that the $\mathfrak{B}$-norm is measurable over $\mathfrak{H}$. Equivalently, the inclusion map $\iota:\mathfrak{H}\to\mathfrak{B}$ supports a Gaussian measure $P_\star$ on $\mathfrak{B}$, and $(\mathfrak{B},\mathfrak{H},P_\star)$ is an abstract Wiener space. \qed
\end{thm}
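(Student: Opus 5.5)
The plan is Gross's classical route: build a measurable norm on a Hilbert space sitting densely inside $\mathfrak{B}$, and then apply Gross's extension theorem to the canonical cylinder measure. Because $\mathfrak{B}$ is separable, I would first choose a dense sequence $\{x_n\}_{n\ge1}\subset\mathfrak{B}$. After discarding dependent terms and rescaling, I would assume $\{x_n\}$ is linearly independent and $\sum_n n^2\|x_n\|_{\mathfrak{B}}^2<\infty$; for example, I could insist that $\|x_n\|_{\mathfrak{B}}\le 2^{-n}$. Rescaling keeps the linear span dense, which is all we need.

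Next I would form the span and Gram--Schmidt orthonormalize the sequence with respect to an inner product that I define so as to make $\{x_n\}$ "small". Concretely, I would declare $\{x_n/\lambda_n\}$ orthonormal for a sequence of weights $\lambda_n\downarrow0$. Let $\mathfrak{H}$ be the completion, which embeds into $\mathfrak{B}$. To get a continuous and injective inclusion $\iota$, the cleanest construction I would try is the standard one in \citet{Kuo1975}: take a nuclear-type injective operator $T:\ell^2\to\mathfrak{B}$, $T(c)=\sum_n c_n\lambda_n x_n$ with $\sum_n\lambda_n\|x_n\|<\infty$. Then set $\mathfrak{H}=T(\ell^2)$ with the transported norm. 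Continuity of $\iota$ follows from Cauchy--Schwarz. Density follows because the range contains every $x_n$. Injectivity can be arranged by choosing $\{x_n\}$ to be a biorthogonal-type sequence, using functionals $\phi_n\in\mathfrak{B}^\star$ that separate points.

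The key step, and the main obstacle, is to show that $\|\cdot\|_{\mathfrak{B}}$ restricted to $\mathfrak{H}$ is a \emph{measurable norm} in Gross's sense. That means: for every $\varepsilon>0$ there is a finite-dimensional projection $P_0$ such that, for every finite-dimensional projection $P\perp P_0$, the canonical Gaussian cylinder measure satisfies $\nu\{\|Px\|_{\mathfrak{B}}>\varepsilon\}<\varepsilon$. I would reduce this to the almost sure convergence in $\mathfrak{B}$ of the random series $\sum_n \xi_n\lambda_n x_n$, where $\{\xi_n\}$ are i.i.d.\ $N(0,1)$. This convergence holds because
\[
\mathbb E\sum_n|\xi_n|\lambda_n\|x_n\|_{\mathfrak{B}}<\infty.
\]
The Cauchy tail estimate of the series is what delivers the measurable-norm condition uniformly over projections orthogonal to $P_0=$ the span of the first $N$ basis vectors. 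The delicate point is passing from the fixed basis to arbitrary orthogonal projections $P$. For that I would use Gaussian rotation invariance together with an Itô--Nisio/contraction-principle comparison, which bounds $\|Px\|$ in law by the tail of the series.

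Finally, Gross's theorem (\citet[Cor.~1, p.~38]{Gross1967}) states that the cylinder measure extends to a countably additive Gaussian measure $P_\star$ on $\mathfrak{B}$ when the norm is measurable. This $P_\star$ is exactly the law of $\sum_n\xi_n\lambda_nx_n$. For $y\in\mathfrak{B}^\star\subset\mathfrak{H}^\star$ its characteristic functional is
\[
\exp\Bigl(-\tfrac12\sum_n\lambda_n^2\langle x_n,y\rangle^2\Bigr)=\exp\bigl(-\tfrac12\|y\|_{\mathfrak{H}}^2\bigr),
\]
by Parseval in $\mathfrak{H}$. This verifies \cref{defn:AbstractWienerSpace}, so $(\mathfrak{B},\mathfrak{H},P_\star)$ is an abstract Wiener space.
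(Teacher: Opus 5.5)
The paper proves nothing here: the statement is imported from Kuo's Theorem~4.4, and your outline is the standard Gross--Kuo construction behind that citation. It is correct in its operative form: $\mathfrak H=T(\ell^2)$ with orthonormal basis $e_n=\lambda_nx_n$ and $\sum_n\|e_n\|_{\mathfrak B}<\infty$, injectivity from a biorthogonal system, measurability of $\|\cdot\|_{\mathfrak B}$, and Gross's extension.

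Three tidy-ups remain.
\begin{enumerate}
\item Drop your second paragraph's convention that $x_n/\lambda_n$ is orthonormal, since it conflicts with $e_n=\lambda_nx_n$.
\item Injectivity needs only $\phi_m(x_n)=\delta_{mn}$, obtained inductively by Hahn--Banach from the dense sequence. That the $\phi_n$ separate points is not used.
\item Measurability follows directly. For any finite-rank $P$ orthogonal to $\mathrm{span}\{e_1,\dots,e_N\}$,
\[
\mathbb E\|Px\|_{\mathfrak B}\le\sum_{n>N}\mathbb E|\langle Px,e_n\rangle_{\mathfrak H}|\,\|e_n\|_{\mathfrak B}\le\sqrt{2/\pi}\sum_{n>N}\|e_n\|_{\mathfrak B}.
\]
Markov's inequality then finishes the check, so the rotation-invariance and It\^{o}--Nisio machinery is unnecessary.
\end{enumerate}
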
\onehalfspacing
\begin{cor}[Abstract Wiener specialization]\label{cor:AbstractWienerSpecialization}
Let $(\mathfrak B,\mathfrak H,P_\star)$ be an abstract Wiener space, let $\mathcal A\subset\mathfrak H$ satisfy the domain conditions in \cref{thm:InfiniteProjectedRepresentation}, and let $\{s_{1j}\}_{j\geq1}$ be an orthonormal coordinate system for $\mathfrak H$. Then the representation and identification conclusions of \cref{thm:InfiniteProjectedRepresentation,thm:InfiniteIdentification} apply with $\mathcal H=\mathfrak H$ and $e_j=s_{1j}$. The Gaussian measure $P_\star$ additionally permits the coordinate coefficients to be represented as Wiener integrals. Those integrals may be signed; they are admissible decision weights only after the nonnegative normalization stated below. \qed
\end{cor}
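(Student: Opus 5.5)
The plan is to treat \cref{cor:AbstractWienerSpecialization} as a direct instantiation of the two Hilbert-space theorems, followed by a separate, purely Gaussian argument for the Wiener-integral claim. First I will check that $\mathfrak H$ qualifies as the space $\mathcal H$ of \cref{thm:InfiniteProjectedRepresentation}. By \cref{defn:AbstractWienerSpace} (or \cref{thm:AbstractWinerSpace}), $\mathfrak H$ is a real separable Hilbert space. Any orthonormal coordinate system $\{s_{1j}\}$ is therefore a countable orthonormal basis, and we set $e_j=s_{1j}$. The projections $P_J$ are then the orthogonal projections onto $\operatorname{span}\{s_{11},\dots,s_{1J}\}$. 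By hypothesis $\mathcal A\subset\mathfrak H$ is convex, has nonempty relative interior, and contains $0$. The cone $K$ is taken closed and convex in $\mathfrak H$. The axioms refer only to the $\mathfrak H$-norm topology, not the coarser $\mathfrak B$-norm. So every hypothesis of \cref{thm:InfiniteProjectedRepresentation} holds verbatim, and its conclusion transfers: there is a unique normalized $\theta\in K^*\subset\mathfrak H$ with $V(a)=\sum_j\theta_j a_j$, where $\theta_j=\langle\theta,s_{1j}\rangle_{\mathfrak H}$. Likewise, \cref{thm:InfiniteIdentification} uses only the inner product, the design law of $D$, and injectivity of $Q$ on $\mathcal H_D\subseteq\mathfrak H$, so it transfers unchanged.

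Second, I will address the additional Gaussian statement. Using the characteristic functional in \cref{defn:AbstractWienerSpace}, each $y\in\mathfrak B^*$, regarded as an element of $\mathfrak H$ through $\mathfrak B^*\subset\mathfrak H^*\cong\mathfrak H$, gives a centered Gaussian variable $x\mapsto\langle x,y\rangle$ on $(\mathfrak B,P_\star)$ with variance $\|y\|_{\mathfrak H}^2$. The map $y\mapsto\langle\cdot,y\rangle$ is therefore an isometry from $(\mathfrak B^*,\|\cdot\|_{\mathfrak H})$ into $L^2(P_\star)$. Because $\mathfrak B^*$ is dense in $\mathfrak H$, this isometry extends to all of $\mathfrak H$, giving the Paley--Wiener map $h\mapsto I(h)$. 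Applying it to $\theta$ and to each basis element yields $I(\theta)=\sum_j\theta_j I(s_{1j})$, with convergence in $L^2(P_\star)$. The $I(s_{1j})$ are i.i.d.\ standard normal, so this displays the coordinate coefficients as Wiener integrals.

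Third, I will show that the sign remark is not vacuous. Nothing in \cref{thm:InfiniteProjectedRepresentation} forces $\theta_j\ge0$ unless $K$ contains the coordinate directions. Realizations of Wiener integrals $I(h)$ are Gaussian and take both signs with positive probability. Hence signed coefficients are genuinely possible. Nonnegativity and summation to one hold only after the map $v_j\mapsto v_j^+/\sum_kv_k^+$, which is defined on the event $\sum_kv_k^+>0$.

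The main obstacle is the second step: making the Wiener-integral representation rigorous when $\theta$ lies in $\mathfrak H$ but not in $\mathfrak B^*$. In that case $\langle x,\theta\rangle$ is not defined pathwise, because $P_\star(\mathfrak H)=0$, and exists only as an $L^2(P_\star)$ limit. I would handle this by first approximating $\theta$ by $P_J\theta$, choosing the basis inside $\mathfrak B^*$, which is possible since $\mathfrak B^*$ is dense in $\mathfrak H$. I would then invoke the isometry together with the projection-error bound $\|\theta-P_J\theta\|_{\mathfrak H}\to0$ from the proof of \cref{thm:InfiniteProjectedRepresentation}.
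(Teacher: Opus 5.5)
Your proposal is correct and follows the paper's own argument. The paper treats the corollary as an immediate substitution $\mathcal H=\mathfrak H$, $e_j=s_{1j}$ into \cref{thm:InfiniteProjectedRepresentation,thm:InfiniteIdentification} and asserts the Wiener-integral clause without proof; your Paley--Wiener isometry $h\mapsto I(h)$ supplies the standard Gaussian construction the paper leaves implicit. There are two small points. First, your ``main obstacle'' is already settled by that isometry, since $I(h)\in L^2(P_\star)$ is defined for every $h\in\mathfrak H$, so you need not re-choose the basis inside $\mathfrak B^*$, and the hypothesis fixes $\{s_{1j}\}$ so you may not. Second, the normalization $v_j^+/\sum_k v_k^+$ must be read on a finite truncation $k\le J$: for the i.i.d.\ $N(0,1)$ variables $v_k=I(s_{1k})$ one has $\sum_{k\ge1}v_k^+=\infty$ almost surely, so without truncation every ratio would be zero and the weights would not sum to one.
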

\tab This corollary isolates the division of labor. Hilbert geometry supplies existence, uniqueness, projection, and identification. The abstract Wiener triple supplies stochastic path semantics and a canonical Gaussian benchmark from which Wiener-integral weights can be constructed.

\tab Based on the foregoing, \mycite{Machina1982} (normalized) preference functional satisfies the following application of \cref{defn:AbstractWienerSpace} and \cref{thm:AbstractWinerSpace}\onehalfspacing
\begin{thm}[\mycite{Machina1982} preference functional on abstract Wiener space]\label{thm:MachinaFuncAbstWienerSpace}
  \citet{Machina1982} preference functional $V(F)$ defined on the choice set $D[0,1]$ extends to the abstract Wiener space $\bigl(\iota,\;\widetilde{L}^2_H[0,1],\;L_B[0,1]\bigr)$. \qed
\end{thm}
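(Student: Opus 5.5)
The plan is to build the extension in three layers, each resting on a result already stated. First, the embedding triple $\bigl(\iota,\widetilde L^2_H[0,1],L_B[0,1]\bigr)$ is made into a genuine abstract Wiener space. Second, Machina's functional is restricted to the Hilbert part and given its local linear form through \cref{thm:InfiniteProjectedRepresentation}. Third, that linear functional is carried from the Cameron--Martin space to the Banach space as a $P_\star$-almost-everywhere defined Wiener integral.

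\textbf{Step 1: obtain the Wiener structure.} To apply \cref{thm:AbstractWinerSpace}, the ambient Banach space must be separable. The sup-norm space $L_B[0,1]$ of all bounded measurable functions is not, so I would replace it by the closure of $\iota(\widetilde L^2_H[0,1])$ in $\|\cdot\|_\infty$. This closure is a separable Banach space because it is the closure of a separable set; a concrete choice is $C[0,1]$, or the closed span of the chosen basis. This is the working reading of the footnote's hypothesis that $V$ is not of bounded variation. On this space, \cref{thm:AbstractWinerSpace} supplies a Hilbert space $\mathfrak H$ densely embedded in it, a measurable norm, and a Gaussian measure $P_\star$ satisfying \cref{defn:AbstractWienerSpace}. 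I would take $\widetilde L^2_H[0,1]=\mathfrak H$, so the triple is an abstract Wiener space in the sense of \cref{cor:AbstractWienerSpecialization}.

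\textbf{Step 2: local linear form on the Hilbert part.} On $D[0,1]\cap\mathfrak H$, Machina's smoothness assumption (Fr\'echet differentiability in $L^1$) gives, at each $F$, a local utility function $U(\cdot;F)$ with
\[
 V(F^*)-V(F)=\int U(x;F)\,d\bigl(F^*-F\bigr)(x)+o\bigl(\|F^*-F\|\bigr).
\]
Integrating by parts turns the linear term into $-\int U'(x;F)\,(F^*-F)(x)\,dx$. I would show that this is a bounded linear functional in the $\mathfrak H$-norm. The argument is that the $\mathfrak H$-norm dominates the $L^2$-norm, which in turn dominates the $L^1$-norm on $[0,1]$. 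Riesz then yields $\theta_F\in\mathfrak H$, and expanding in the orthonormal coordinates $\{s_{1j}\}$ gives $\sum_j\theta_{F,j}a_j$. This is exactly the projected-local form of \cref{thm:InfiniteProjectedRepresentation} applied to the local order at $F$, with uniform projection error controlled by the Cauchy--Schwarz bound in that theorem's proof.

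\textbf{Step 3: extension to the Banach space.} A linear functional $h\mapsto\langle\theta_F,h\rangle_{\mathfrak H}$ generally does not extend continuously to $\mathfrak B$. It does, however, extend as the Paley--Wiener integral
\[
 \widetilde V_F(x)=\lim_{J\to\infty}\sum_{j\le J}\theta_{F,j}\,(s_{1j},x).
\]
Here each $(s_{1j},\cdot)$ is an element of $\mathfrak B^*$, obtained by density of $\mathfrak B^*$ in $\mathfrak H$ after approximating. The partial sums are centred Gaussian with variances $\sum_{j\le J}\theta_{F,j}^2$, so they form a Cauchy sequence in $L^2(P_\star)$ and converge in $L^2(P_\star)$. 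By the Itô--Nisio theorem they also converge $P_\star$-almost surely. The limit agrees with the Hilbert value on $\mathfrak H$, which makes it the required extension of $V$'s local form. Adding the base value $V(F)$ completes the extension of the functional itself.

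\textbf{Main obstacle.} The hard step is Step 1 together with the compatibility between $D[0,1]$ and $\mathfrak H$. Distribution functions are monotone and may have jumps, so they need not lie in a Cameron--Martin-type space, and $L_B[0,1]$ is nonseparable. I would try to settle this by working with $\mathfrak B$ as the sup-norm closure described in Step 1 and by extending $V$ by continuity from the dense set $D[0,1]\cap\mathfrak H$. If that fails, I would fall back on asserting the extension only $P_\star$-almost everywhere, which is the natural sense of ``extends'' for Wiener integrals.
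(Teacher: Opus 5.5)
\textbf{There is a genuine gap at the end of Step 3.} What you extend is not $V$. For each fixed $F$ you extend the tangent affine map $G\mapsto V(F)+L_F(G-F)$, where $L_F$ is Machina's Fr\'echet derivative. Because $V$ is nonlinear, this map coincides with $V$ at $F$ and approximates it only to first order nearby. The family indexed by $F$ does not glue into a single functional on the Banach space whose restriction to $D[0,1]$ is $V$. So the sentence ``adding the base value $V(F)$ completes the extension of the functional itself'' is exactly where the argument stops proving the statement.

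The Paley--Wiener construction is also unnecessary even for $L_F$. Since $L_F$ is $L^1$-continuous, $|L_F(h)|\le C_F\|h\|_{L^1}\le C_F\|h\|_\infty$, so $L_F$ is already an element of $\mathfrak B^*$. It therefore extends to all of $\mathfrak B$ by continuity, everywhere rather than $P_\star$-a.s. Your caveat that Hilbert-space functionals ``generally do not extend'' does not apply here.

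\textbf{Neither of your remedies resolves the domain problem you flag.}
\begin{itemize}
\item No separable sup-normed space can contain $D[0,1]$, because $\{\mathbf 1_{[t,1]}\}_{t\in[0,1]}$ is an uncountable $1$-separated subset of it. With $\mathfrak B=C[0,1]$ you lose every $F$ with an atom in $(0,1]$.
\item Extending $V$ by continuity from $D[0,1]\cap\mathfrak H$ only reaches the sup-norm closure of that set. That closure lies inside $D[0,1]$, where $V$ is already defined, so it neither recovers the lost part of $D[0,1]$ nor defines anything off $D[0,1]$.
\item The $P_\star$-a.e.\ fallback says nothing about $V$. Every $F\in D[0,1]$ has $F(1)=1$, and $x\mapsto x(1)$ is a nonzero element of $\mathfrak B^*$, hence $N(0,\sigma^2)$-distributed with $\sigma>0$ under the centred measure $P_\star$. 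So $D[0,1]\cap\mathfrak B$ is $P_\star$-null, and an a.e.-defined object places no constraint on the values of $V$ on its domain.
\end{itemize}

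\textbf{The repair.} The paper simply invokes \cref{thm:AbstractWinerSpace} for a Banach space containing $D[0,1]$. Your objection that $L_B[0,1]$ is nonseparable is correct. The clean fix is to restore separability without discarding $D[0,1]$, by using Machina's own topology. $L^1[0,1]$ is separable and contains $D[0,1]$ as a closed convex subset, so \cref{thm:AbstractWinerSpace} gives an abstract Wiener space with Banach part $L^1[0,1]$. $V$ is $L^1$-continuous on $D[0,1]$ by Fr\'echet differentiability. The Tietze (or Dugundji) extension theorem then extends it continuously to all of $L^1[0,1]$, everywhere and independently of $P_\star$. Your Steps 2 and 3 are not needed for this statement.
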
\onehalfspacing
\tab Under the one-dimensional Gaussian projection used below, the Hermite functions generated by the Hermite polynomials $\{H_n(x)\}_{n=0}^{\infty}$ form an orthonormal basis for $L^2(\mathbb{R})$  \citep[see~e.g.][pg.~25]{AkheiGlaz1961} where
\begin{equation}
    H_n(x) = (-1)^n\;e^{x^2}\frac{d^n\;e^{-x^2}}{dx^n}
\end{equation}
\begin{figure}[!htb!]\vspace*{-0.65cm}
   \centering
   \begin{minipage}[h]{0.4\linewidth}
      \captionof{figure}{Gauss--Hermite state-feature densities}
      \label{fig:AbsWienerGaussHkdens}
      \centerline{\includegraphics[scale=.8]{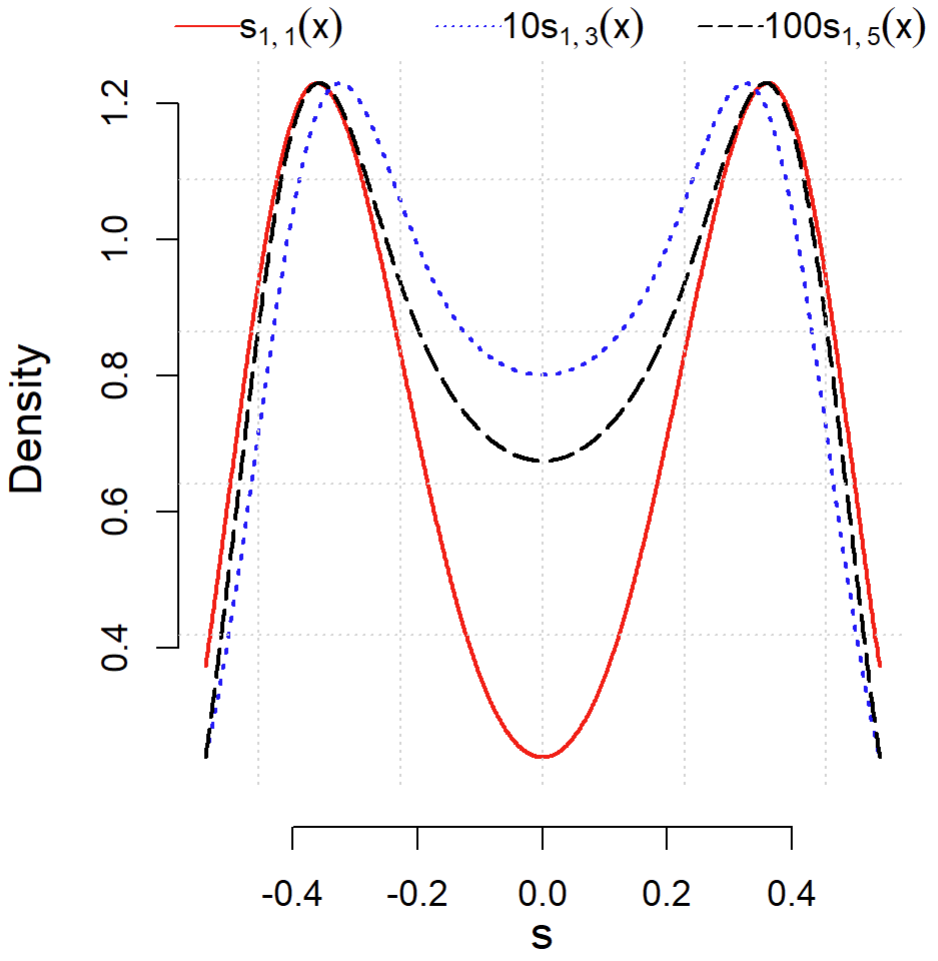}}
   \end{minipage}
   \vspace*{.5cm}
   \begin{minipage}[h]{0.9\textwidth}
     \footnotesize\textit{Notes:} \cref{fig:AbsWienerGaussHkdens} shows the normalized ``Gauss-Hermite" density obtained for 200 evenly spaced points in the interval $[-2,2]$. The plots are for $s_{11}, 10s_{13}, 100s_{15}$ where scales 10 and 100 are used to harmonize the $s$ axis range. Later, we will be working with the positive half range $[0,2]$. So, the half density must be multiplied by 2 to preserve the total area under the curve being 1.
   \end{minipage}
\end{figure}
\noindent Without loss of generality, the interval $[0, M] \subset \mathbb{R}$. So the one-dimensional Gaussian projection of $\widetilde{L}^2_H$ can be restricted to normalized $[0, M]$ in the sequel for $x\in [0,M]$. In this projected coordinate, square-integrable functions can be represented by Hermite basis functions. See \cite{CameronMartin1947}. Thus, we write the projected orthogonal state functions
\begin{equation}
    s_{1,n}(x) = A_nH_n(x)e^{-\frac{x^2}{2}}\label{eq:BehavioralBasisFunction}
\end{equation}
which upon normalization (see e.g. \citet[p.~57]{Wiener1933}, \citet[Appendix~A]{boyd2018}) gives
\begin{equation*}
    \int^{\infty}_{-\infty}|s_{1,n}(x)|^2dx = A^2_n\displaystyle\int^{\infty}_{-\infty}H^2_n(x)e^{-x^2}dx=1,\qquad A_n = \bigl(2^n n!\sqrt{\pi}\bigr)^{-1/2}
\end{equation*}
Here $|s_{1,n}(x)|^2$ is a state dependent ``Gauss-Hermite" probability density function on $\mathbb{R}$.\footnote{See \citet[p.~181]{Kreyszig1978} where Gram-Schmidt orthogonalization was applied to the product of a Gaussian density $e^{-x^2/2}$ and Hermite polynomial $H_n(x)$  to derive an orthogonal basis like $\{s_{1j}\}^\infty_{j=1}$. Even though there are other orthogonal bases, for our purposes the ``Gauss-Hermite" basis conveniently facilitates computation under the Gaussian reference measure in \cref{sec:Applications}.} A plot of the densities for those functions are shown in \cref{fig:AbsWienerGaussHkdens}. The density functions are characterized by high kurtosis and symmetry around 0. These functions are basis features of payoff states; they are not chosen by the decision maker. A risk-averse decision maker may evaluate acts differently when their payoff variation loads on $s_{11}$ rather than $s_{13}$ or $s_{15}$, but the preference relation remains a relation over acts. We let $\emph{S}$ be the set of state features in the sequel over the half-range $[0,2]$.

\subsubsection{Vector valued utility as abstract Fourier coefficients in Hilbert space}\label{subsubsec:VectorValuedUtility}
Let $U(x)$ be a utility function defined on $X$ taking values in $\mathbb{R}$, and $x\in X$ i.e. $U:X\to \mathbb{R}$. The abstract Fourier coefficient \citep[p.~89]{Haase2010} is given by
\begin{align}
    u_j(x,s) &= \frac{<U,\;s_{1,j}>}{<s_{1,j},\;s_{1,j}>}=\frac{\displaystyle \int_{x\in X} U(z)\,s_{1,j}(z)dz}{\displaystyle\int_{x\in X} |s_{1,j}(z)|^2\;dz}\label{eq:AbstractFourierCoeff}
\end{align}
where $u_j(x,s)$ is a coordinate of the vector-valued function \citep[p.~40]{ReedSimon1980} $U(x)$ relative to the ``axis" $s_{1,j}$ and $\mathbf{u}(s)=(u_1(s),u_2(s),\ldots)$ is an infinite-dimensional vector. The $s_{1,j}$s constitute a non-unique behavioral orthogonal basis, call it $S$,  for our separable Hilbert space $\widetilde{L}^2_H$. In other words, $\mathbf{u}(s)$ is the orthogonal projection of $U$ on $S$. In the decision-theoretic interpretation, $s_{1,j}$ is a state feature of the payoff path and $u_j(x,s)$ is the local utility coefficient attached to that feature. The utility functional is therefore a local expansion of an act's payoff consequences across state features. This is analogous to an orthogonal expansion in mathematical physics, but no claim is made that the state feature is itself an object of preference or that the model requires complex-valued quantum states.

\tab We use the one-dimensional Gaussian projection of abstract Wiener space to fix the reference measure $P_\star$ and to avoid complex-valued basis functions. The model has some formal features in common with quantum-mechanical constructions, but it differs in its wavefunction representation: the state features $s_{1,n}$ are real-valued coordinates of payoff paths, not complex quantum states. Accordingly, we interpret the normalized squared amplitude as an induced outcome density. For $A\in\mathfrak{B}(\mathbb{R})$,
\begin{equation}
   \text{P}_n(A) = \displaystyle\int_{x\in A}|s_{1,n}(x)|^2dx
\end{equation}
This squared-amplitude map is Born-rule-type in form but not in derivation. It is an interpretive assumption about how state-feature amplitudes translate into outcome densities, not a theorem. It should not be confused with the positive-part normalization $p_j=v_j^+/\sum_kv_k^+$ that converts signed Wiener weights into admissible decision weights. The induced density is quadratic and state dependent, and it connects to nonlinear source-dependent probability weighting of the kind studied in prospect theory \citep{AbdellaouiBaillonPlacidoWakker2011} and to earlier Born-rule-type constructions in decision theory \citep{WangBusemeyerAtamanspacherPothos2013,CharlesCadogan2018}. For
\begin{align}
   U(x;\,\mathbf{u}) &= \displaystyle \sum^{\infty}_{j=0}\,u_js_{1,j}(x)
   \displaystyle\intertext{$u_j$ is local utility, and it is the $j$-th coordinate of $U(x;\,\mathbf{u})$ such that}
   \displaystyle\int^{\infty}_{-\infty}|U(x;\,\mathbf{u})|^2 dx &= \displaystyle \sum^{\infty}_{j=0}u^2_j\;\displaystyle\int^{\infty}_{-\infty}|s_{1,j}(x)|^2dx= \sum^{\infty}_{j=0}u^2_j < \infty\label{eq:BornRuleType}
\end{align}
according to Riesz-Fisher Theorem \citep[p.~70]{RieszNagy1955}. Thus,
\begin{align}
   \biggl (\sum^{\infty}_{j=0}\,u^2_j \biggr )^{-1}\displaystyle\int^{\infty}_{-\infty}|U(x;\,\mathbf{u})|^2 &= 1
\end{align}
Normalize $u_j$ so that for $\pmb{u} = (u_1,u_2,\dotsc,u_n,\dotsc),\quad \widehat{u}_j = \frac{u_j}{\|\pmb{u}\|}$ and write
\begin{align}
   U(x;\,\widehat{\mathbf{u}}) = \sum^{\infty}_{j=0}\widehat{u}_js_{1,j}(x)\;\;\text{where}\;\;&\widehat{\pmb{u}} = (\widehat{u}_0,\widehat{u}_1,\widehat{u}_2,\dotsc,\widehat{u}_n,\dotsc)\label{eq:EUTlocalBasis}
\end{align}
where $\widehat{u}_j$ is a normalized coordinate. In that way $\int^{\infty}_{-\infty}|U(x;\,\widehat{{\mathbf{u}}})|^2 dx =1$.\footnote{See \citet[p.~201]{AnscombeAumann1963} for normalization of utility functions.} The foregoing analysis shows that $U(x;\,\mathbf{u})$ is in the class of \cite{FriedmanSavage1948, Markowitz1952} utility functions if we set $u_{2k}=0$ for nonconstant even Hermite polynomials (by virtue of the behavioral basis functions $s_{1,j}$ relationship in \eqref{eq:BehavioralBasisFunction}) because odd Hermite polynomials pass through the origin and odd powers admit the sinusoidal pattern in \mycite{Markowitz1952} sketch.
\begin{assumption}\label{assum:AversionToEvenHermite}
  $u_{2k}=0,\;k=1,2,\ldots$ for nonconstant even Hermite polynomials.
\end{assumption}
\tab Thus, the behavioural probability that the utility path $U(x;\,\mathbf{u})$ is in a given set $A\in \mathfrak{B}(\mathbb{R})$ is given by
\begin{equation}
  \text{P}(A) = \text{Pr}\{U(x;\,\mathbf{u})\in A\}=\text{Pr}\{U(x;\,\widehat{\mathbf{u}})\in A\}=\displaystyle\int_{x\in U^{-1}(A)}|U(x;\,\widehat{\mathbf{u}})|^2dx\label{eq:UtilProb}
\end{equation}
But this probability is functionally equivalent to the joint probability associated with an infinite dimensional cylindrical set
\begin{align}
  \text{P}(A) &= \text{Pr}\{u_1\in A_1,u_2\in A_2,\dotsc,u_n\in A_n,\dotsc\}\\
  &=\text{Pr}\{(u_1,u_2,\dotsc,u_n,\dotsc)\in A_0\times A_1\times\dotsc A_n\times\dotsc\}\label{eq:CylinderProbabilities}
\end{align}
Furthermore, by construction local utility $u_j(s)$ is state dependent for states $s\in \emph{S}$ in \eqref{eq:AbstractFourierCoeff}.  According to Kolmogorov's representation theorem \citep[pp.~107-108]{GikhmanSkorokhod1969}, there exist a probability space $(\Omega,F,P)$ and a function $g(u(s),\omega)$ such that the two objects constitute a representation of \eqref{eq:CylinderProbabilities}. \cref{thm:MachinaInfDimPreFunc} gives meaning to those objects for $V(F)\in \widetilde{L}^2_H[0,1]$ where we have the following\onehalfspacing
\begin{center}
\textbf{Main Result}
\end{center}
\begin{thm}[Infinite dimensional state dependent \citeauthor{Machina1982}'s preference functional]\label{thm:MachinaInfDimPreFunc}
  Let $\emph{S}$ be the set of states, $s\in \emph{S}$, let $V(F)\in \widetilde{L}^2_H[0,1]$, and let $P_\star$ be the Wiener measure induced by the abstract Wiener space. Then
  \begin{align}
    V(s,F) &= \sum^{\infty}_{j=1}\,\widehat{u}_j(s)v_j(s,P_\star)\quad \widehat{u}_j(s) = \frac{u_j(s)}{\|\pmb{u}(s)\|},\;\; \widehat{\pmb{u}}(s) = (\widehat{u}_0(s),\widehat{u}_1(s),\widehat{u}_2(s),\dotsc,\widehat{u}_n(s),\dotsc)
  \end{align}
  is a state dependent representation of \,$V(F)$ on the abstract Wiener space  $\left(L[0,1],\widetilde{L}^2_H[0,1],P_\star\right)$ where $u_j(s)$ is state dependent local utility and $v_j(s,P_\star)$ is a local state dependent Wiener-integral weight. In the Gaussian case, $v_j$ is centered with variance $\|s_{1j}\|^2$. Hence $v_j$ is generally a signed random weight, not automatically a probability. When probability weights are required one must use a nonnegative normalization, for example $v_j^+=\max\{v_j,0\}$ divided by $\sum_k v_k^+$ whenever the denominator is positive. The product $\widehat{u}_j(s)v_j(s,P_\star)$ is state dependent local expected utility in signed-weight form, and $F$ is a distribution function over $\widehat{\pmb{u}}(s)$. Moreover, $V(s,F)$ is also a Wiener functional.\qed
\end{thm}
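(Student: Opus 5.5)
The plan is to reduce \cref{thm:MachinaInfDimPreFunc} to \cref{thm:InfiniteProjectedRepresentation} and \cref{cor:AbstractWienerSpecialization}, and then attach the Gaussian structure coordinatewise.

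\emph{Step 1 (Hilbert representation).} By \cref{thm:MachinaFuncAbstWienerSpace}, $V(F)$ extends to the abstract Wiener space $\bigl(\iota,\widetilde L^2_H[0,1],L_B[0,1]\bigr)$. Take $\mathcal H=\widetilde L^2_H$ with the orthonormal Gauss--Hermite system $\{s_{1j}\}$. \Cref{cor:AbstractWienerSpecialization} then supplies a unique unit-norm coefficient sequence $\theta\in\mathcal H$ with $V=\langle\theta,\cdot\rangle=\sum_j\theta_ja_j$, and the series converges in norm.

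\emph{Step 2 (factorization).} For each state $s$, expand the local utility $U$ in the same basis via \eqref{eq:AbstractFourierCoeff}. This gives $\mathbf u(s)\in\ell^2$ by Riesz--Fischer, as in \eqref{eq:BornRuleType}, and we normalize to $\widehat{\mathbf u}(s)$. On coordinates where $\widehat u_j(s)\neq0$, define $v_j(s,P_\star)$ by the relation $\theta_j a_j=\widehat u_j(s)v_j$. Equivalently, set $v_j=\mathcal I(\cdot)$ evaluated at the paired loading, where $\mathcal I$ is the Paley--Wiener map $h\mapsto\langle h,\cdot\rangle^\sim$ from $\mathfrak H$ into $L^2(\mathfrak B,P_\star)$. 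This map is an isometry: for $y\in\Omega^\star$, the characteristic functional in \cref{defn:AbstractWienerSpace} shows that $\langle y,\cdot\rangle$ is centered Gaussian with variance $\|y\|_{\mathfrak H}^2$. Density of $\Omega^\star$ in $\mathfrak H$ then extends $\mathcal I$ to all of $\mathfrak H$ by $L^2$-limits. Hence $v_j=\mathcal I(s_{1j})$-type integrals are centered with variance $\|s_{1j}\|^2$. Because they are Gaussian, they take negative values with positive probability, which gives the signed-weight claim. The positive-part normalization $v_j^+/\sum_kv_k^+$ is well defined on the event $\{\sum_k v_k^+>0\}$, and that event has positive probability.

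\emph{Step 3 (convergence and Wiener functional).} Convergence of $\sum_j\widehat u_jv_j$ follows in $L^2(P_\star)$. The $v_j$ are independent standard Gaussians scaled by $\|s_{1j}\|$, so $\mathbb E\bigl(\sum_{j\le J}\widehat u_jv_j\bigr)^2=\sum_{j\le J}\widehat u_j^2\|s_{1j}\|^2\le1$. The partial sums are therefore Cauchy, and the limit exists almost surely by the It\^o--Nisio/Kolmogorov two-series theorem. The limit is an $L^2(P_\star)$ measurable function on $\mathfrak B$, that is, a Wiener functional. Its conditional mean, or equivalently the projection of $\mathfrak H$ onto the span of the $s_{1j}$, recovers the deterministic $V(F)$ of Step 1. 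Under the Kolmogorov representation of the cylinder probabilities \eqref{eq:CylinderProbabilities}, this allows us to read $F$ as a distribution function over $\widehat{\mathbf u}(s)$.

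\emph{Main obstacle.} The hardest step is Step 2: making the identification $\theta_j=\widehat u_j(s)v_j$ rigorous, so that the deterministic Riesz coefficient and the random Wiener integral are genuinely linked. I would first define $v_j$ as the Wiener integral of $s_{1j}$ weighted by the act loading. I would then show that $V(s,F)=\mathbb E_{P_\star}[\,\cdot\mid\text{state }s]$ reproduces $\langle\theta,a\rangle$ up to the positive scalar permitted by uniqueness in \cref{thm:InfiniteProjectedRepresentation}. If this fails, the fallback is to state the representation in law, as an equality of Gaussian functionals, rather than pathwise.
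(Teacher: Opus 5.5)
There is a genuine gap. It is the one you flag as the main obstacle, and your proposed fix does not close it. In Step~2 you define $v_j$ in two incompatible ways.

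\emph{The two definitions of $v_j$ do not match.} From $\theta_ja_j=\widehat u_j(s)v_j$ you get the deterministic, act-dependent ratio $\theta_ja_j/\widehat u_j(s)$. It has zero variance and a definite sign, and it makes $\sum_j\widehat u_jv_j=\sum_j\theta_ja_j$ a tautology rather than a weighting of features. It is also undefined where $\widehat u_j(s)=0$, for instance on every even coordinate under Assumption~\ref{assum:AversionToEvenHermite}, so matching $V$ silently requires $\theta_ja_j=0$ there. By contrast, $v_j=\mathcal I(s_{1j})$ is a nondegenerate centered Gaussian that does not depend on the act. Everything you establish in Steps~2--3 (centering, variance $\|s_{1j}\|^2$, sign changes, almost-sure convergence, measurability) is about the second object. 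The link to $V$ from Step~1 is about the first.

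\emph{The reconnection in Step~3 fails.} Your series is $\sum_j\widehat u_j\mathcal I(s_{1j})=\mathcal I(\widehat U)$ with $\widehat U=\sum_j\widehat u_js_{1j}$. This is centered, so its mean given a deterministic state is $0$, not $\langle\theta,a\rangle$. Weighting the integrands by the loadings $a_j$, as your obstacle paragraph suggests, still leaves a centered Gaussian. Projecting $\mathfrak H$ onto the closed span of a basis is the identity, so that step adds nothing. Your fallback also fails, because a constant is never equal in law to a nondegenerate Gaussian. What is missing is a mechanism through which the act enters the random object. One candidate is the Cameron--Martin shift: for $a\in\mathfrak H$, $\mathcal I(h)(\omega+a)=\mathcal I(h)(\omega)+\langle h,a\rangle_{\mathfrak H}$ holds $P_\star$-a.s. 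This makes $\langle\widehat{\mathbf u},a\rangle$ the mean of your series under the translate of $P_\star$ by $a$. That mean represents $V$ only if $\theta\propto\widehat{\mathbf u}$, and nothing in your argument supplies this. With the paper's $\theta_j=p_ju_j$, it would force equal weights $p_j$.

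\emph{Step~1 is not available either.} \Cref{cor:AbstractWienerSpecialization} inherits the if-and-only-if hypotheses of \cref{thm:InfiniteProjectedRepresentation}: a preference relation satisfying Axioms~\ref{ax:HilbertWeakOrder}--\ref{ax:ConeMonotonicity}. Mixture independence (Axiom~\ref{ax:HilbertIndependence}) then forces a globally linear value. The present statement assumes no axioms. Moreover, Machina's functional is precisely the case in which independence is dropped; it is only Fr\'echet differentiable, as in \eqref{eq:MachinaLocalUtilDiff}. At most you could apply Riesz to the derivative at a fixed $F$, which reproduces the same linking problem.

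\emph{How the paper proceeds.} The paper never passes through $\theta$. Its support is the construction preceding the statement:
\begin{itemize}
\item it expands $U$ in the Gauss--Hermite basis via \eqref{eq:AbstractFourierCoeff} and normalizes $\widehat{\mathbf u}$ by Riesz--Fischer;
\item it reads $|U(x;\widehat{\mathbf u})|^2$ as a Born-rule-type density;
\item it rewrites the induced law as the cylinder probability \eqref{eq:CylinderProbabilities} and invokes Kolmogorov's representation theorem.
\end{itemize}
The $v_j$ are simply posited as Wiener integrals and later read as the signed change of measure $d\widetilde Q_j/dP_\star=s_{1j}$.

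If you drop Step~1 and define $V(s,F):=\mathcal I(\widehat U)$ directly, your Paley--Wiener isometry argument does rigorously prove the centering, variance, signedness, and Wiener-functional claims, which the paper only asserts. The ``representation of $V(F)$'' then holds by construction rather than as a theorem.

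\emph{The infinite-index normalization degenerates.} With orthonormal $s_{1j}$, the $v_j$ are i.i.d.\ $N(0,1)$. By the strong law, $\sum_kv_k^+=\infty$ almost surely, so normalizing the positive parts over infinitely many indices sends every weight to $0$. The normalization is meaningful only at a finite truncation $J$, as in \cref{thm:FiniteProjectedElicitation}.
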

\begin{rem}
  If the weights are nonnegative and satisfy $\sum_jv_j(P_\star)=1$, then $V(F)$  coincides with \citet{VonNeumanMorgenstern1953} expected utility functional. Without this normalization the expression should be read as a signed or stochastic-weight expected utility representation.
\end{rem}
\begin{cor}[Wiener integral]\label{cor:WienerIntegral}
  Local weight $v_j(s,P_\star)$ is a Wiener integral. It becomes a probability weight only after a nonnegative normalization.
\end{cor}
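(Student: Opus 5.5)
The plan is to realize $v_j(s,P_\star)$ explicitly as the Paley--Wiener (isonormal) map evaluated at the coordinate $s_{1j}$, and then read off both claims of \cref{cor:WienerIntegral} from properties of that map. Let $(\mathfrak B,\mathfrak H,P_\star)$ be the abstract Wiener space of \cref{thm:AbstractWinerSpace}, with $\mathfrak H=\widetilde L^2_H[0,1]$. The first step uses the characteristic functional in \cref{defn:AbstractWienerSpace}. For $y\in\mathfrak B^\star\subset\mathfrak H^\star\cong\mathfrak H$, the linear functional $\omega\mapsto\langle y,\omega\rangle$ is a centered Gaussian random variable on $(\mathfrak B,P_\star)$ with variance $\|y\|^2_{\mathfrak H}$. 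Hence the map $I:y\mapsto\langle y,\cdot\rangle$ is a linear isometry from $(\mathfrak B^\star,\|\cdot\|_{\mathfrak H})$ into $L^2(\mathfrak B,P_\star)$.

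Second, I would use density of $\mathfrak B^\star$ in $\mathfrak H$, which follows from the dense continuous embedding $\iota$, to extend $I$ by continuity to all of $\mathfrak H$. For $h\in\mathfrak H$, choose $y_n\in\mathfrak B^\star$ with $y_n\to h$ in $\mathfrak H$. Then $I(y_n)$ is Cauchy in $L^2(P_\star)$, and its limit $I(h)$ is by definition the Wiener integral of $h$. Limits of centered Gaussians in $L^2$ are centered Gaussian, so $I(h)\sim N(0,\|h\|^2_{\mathfrak H})$. I then set $v_j(s,P_\star):=I(s_{1j})$, where $s_{1j}$ is the Hermite coordinate of \eqref{eq:BehavioralBasisFunction} and lies in $\mathfrak H$ by \cref{cor:AbstractWienerSpecialization}. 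This yields the first claim. It also recovers the variance statement in \cref{thm:MachinaInfDimPreFunc}, and it shows that $\{v_j\}$ is an i.i.d.\ standard Gaussian sequence when the $s_{1j}$ are orthonormal. The state dependence in $s$ enters only through the coordinate, so no extra argument is needed beyond measurability in $s$.

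Third, for the second claim, I would observe that a nondegenerate centered Gaussian satisfies $P_\star(v_j<0)=1/2>0$, so $v_j$ is not a probability weight as it stands. I would then verify that $p_j=v_j^+/\sum_k v_k^+$ is well defined on the event $E=\{\sum_k v_k^+>0\}$ and is nonnegative with $\sum_j p_j=1$. For a finite projection $P_J$, the event $E$ has probability $1-2^{-J}$ by independence. In the infinite sequence case, $\sum_k v_k^+=\infty$ almost surely because the $v_k$ are i.i.d., so the normalization must be taken on finite projections. This matches the paper's restriction to finite projected comparisons.

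The main obstacle is the extension step when the embedding triple is only known to exist abstractly via \cref{thm:AbstractWinerSpace}. One must check that $\mathfrak B^\star$ is dense in $\mathfrak H$ under the identification $\mathfrak H^\star\cong\mathfrak H$, and that the resulting $I(h)$ does not depend on the approximating sequence. I would first try to argue both directly from injectivity of $\iota$, which makes $\iota^\star:\mathfrak B^\star\to\mathfrak H^\star$ have dense range. Failing that, I would cite the standard construction in Kuo (1975) or Nualart (2006).
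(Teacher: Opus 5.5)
Your argument is correct, and it does more than the paper. The paper gives no separate proof of \cref{cor:WienerIntegral}; the corollary is read off \cref{thm:MachinaInfDimPreFunc}. There $v_j(s,P_\star)$ is posited to be a Wiener-integral weight, centered Gaussian with variance $\|s_{1j}\|^2$. In the proof of \cref{cor:EstimatedVF} it is written informally as $\int_X s_{1j}\,dP_\star$ with $|dP_\star|^2=dx$. Signedness and the need for the positive-part normalization are then immediate. The signed change of measure $d\widetilde{Q}_j/dP_\star=s_{1j}$ and its normalized positive part in \eqref{eq:SignedChangeOfMeasure} are offered as a parallel reading.

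You instead construct $v_j=I(s_{1j})$ as the Paley--Wiener isometry obtained from the characteristic functional in \cref{defn:AbstractWienerSpace}. The obstacle you flag closes exactly as you suggest. If $h\in\mathfrak H$ is orthogonal to $\iota^\star(\mathfrak B^\star)$, then $y(\iota h)=0$ for all $y\in\mathfrak B^\star$, so $\iota h=0$ by Hahn--Banach and $h=0$ by injectivity of $\iota$. Independence from the approximating sequence is automatic for an isometric extension.

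Your route gives a genuine derivation of the Wiener-integral and variance claims that the paper only asserts. It also exposes a point the paper glosses over. For an orthonormal system the $v_j$ are i.i.d.\ $N(0,1)$, so $\sum_k v_k^+=\infty$ almost surely. Read over all $k$, the normalization $v_j^+/\sum_k v_k^+$ in \cref{thm:MachinaInfDimPreFunc} is therefore degenerate. It becomes meaningful only after finite projection, where the normalization event has probability $1-2^{-J}$. That finite-$J$ restriction is what the paper actually uses in \cref{thm:FiniteProjectedElicitation} and the Luce experiment.

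These refinements rely on orthonormality of $\{s_{1j}\}$ in $\mathfrak H$, as posited in \cref{cor:AbstractWienerSpecialization}. The two claims of the corollary itself need only $s_{1j}\in\mathfrak H\setminus\{0\}$.
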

\begin{cor}[Estimates for $V(F)$]\label{cor:EstimatedVF}
  $\|V(F)\|\leq K^{\frac{1}{2}}_N \left(\sup_j\displaystyle\int_Xs_{1,j}^2(x)\,dx\right)^{\frac{1}{2}}$ for some constant $K_N$.\qed
\end{cor}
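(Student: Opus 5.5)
We bound $V(F)$ directly from its series form in \cref{thm:MachinaInfDimPreFunc}, $V(s,F)=\sum_j\widehat u_j(s)v_j(s,P_\star)$. The tools are Cauchy--Schwarz, the unit normalization of $\widehat{\pmb u}$, and the Gaussian variance structure of the Wiener-integral weights from \cref{cor:WienerIntegral}. The norm $\|V(F)\|$ will be read as the $L^2(P_\star)$ norm of the Wiener functional. We work first with the finite projection $V_N=\sum_{j\le N}\widehat u_jv_j$, which is legitimate by the projection control in \cref{thm:InfiniteProjectedRepresentation}. The constant $K_N$ is then identified with a quantity depending on the truncation level $N$ (or on $\sup_N$ if it is finite).

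First, apply Cauchy--Schwarz in $\ell^2$ pathwise in $\omega$:
\[
 |V_N|^2\le\Bigl(\sum_{j\le N}\widehat u_j^2\Bigr)\Bigl(\sum_{j\le N}v_j^2\Bigr)\le\sum_{j\le N}v_j^2 ,
\]
since $\|\widehat{\pmb u}\|=1$ by \eqref{eq:EUTlocalBasis}. Second, take $P_\star$-expectations and use the Itô isometry for Wiener integrals. By \cref{thm:MachinaInfDimPreFunc}, $v_j$ is centered Gaussian with variance $\|s_{1j}\|^2=\int_X s_{1,j}^2(x)\,dx$, so
\[
 \mathbb E_{P_\star}|V_N|^2\le\sum_{j\le N}\int_X s_{1,j}^2\,dx\le N\sup_j\int_X s_{1,j}^2(x)\,dx .
\]
Setting $K_N=N$ and taking square roots gives the displayed estimate for the projection.

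A sharper alternative avoids the crude factor $N$. If the $v_j$ are Wiener integrals of orthonormal Cameron--Martin directions, they are independent (or at least uncorrelated). Then $\mathbb E|V_N|^2=\sum_j\widehat u_j^2\|s_{1j}\|^2\le\sup_j\|s_{1j}\|^2$, which holds with $K_N=1$ uniformly in $N$. I would present this version as the main line, with the Cauchy--Schwarz bound as a fallback when only the variance, not orthogonality, of the $v_j$ is available.

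The main obstacle is passing from $V_N$ to $V$, and pinning down what $\|V(F)\|$ means. With uniform control ($K_N=1$ in the orthogonal case), $V_N$ is Cauchy in $L^2(P_\star)$ because its tail is $\sum_{j>N}\widehat u_j^2\|s_{1j}\|^2\to0$. So $V_N\to V$ in $L^2(P_\star)$ and the bound survives the limit by lower semicontinuity of the norm. In the nonorthogonal case, $K_N$ must be allowed to grow, and the statement is then only a finite-projection estimate. I would state it that way explicitly, consistent with the paper's emphasis on finite projected comparisons. On the half-range $X=[0,2]$ each $\int_X s_{1,j}^2\le1$, so the supremum is finite and the bound is nontrivial.
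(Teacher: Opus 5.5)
Your orthogonal main line is essentially the paper's argument. The paper's first inequality $\|\sum_j\widehat u_jv_j\|^2\le(\sum_j\widehat u_j^2)\sup_j\|v_j\|^2$ is exactly what uncorrelatedness of the $v_j$ gives, $|dP_\star(x)|^2=dx$ is its shorthand for the Wiener isometry, and its constant is $K_N=\sum_{j\le N}\widehat u_j^2\le 1$. Making the orthogonality hypothesis explicit, passing from $V_N$ to $V$ separately, and keeping the $K_N=N$ Cauchy--Schwarz fallback is correct and more careful than the paper, whose chain replaces $\sum_{j\ge1}\widehat u_j^2=1$ by $K_N<1$ for the full series, a step valid only for the truncation $V_N$.
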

\vspace*{-0.5cm}\begin{proof}\reducefonten
   \begin{align}
   \|V(F)\|^2 &=\Bigl\|\sum^{\infty}_{j=1}\,\widehat{u}_jv_j(s,P_\star)\Bigr\|^2\;\leq \left(\sum^{\infty}_{j=1}\,\widehat{u}^2_j\right)\sup_j\|v_j^2(P_\star)\|^2\;\leq\left(\sum^N_{j=1}\,\widehat{u}^2_j\right)\sup_j\int_Xs^2_{1,j}(x)|dP_\star(x)|^2\;\leq\\ &K_N\left(\sup_j\int_Xs^2_{1,j}(x)\,dx\right)<\infty\quad \text{where $\sum^{\infty}_{j=1}\,\widehat{u}^2_j=1$, \;$K_N=\sum^N_{j=1}\,\widehat{u}^2_j<1$,\, $K_N\uparrow 1$ and $|dP_\star(x)|^2=dx$.}
   \end{align}
\end{proof}
\vspace*{-1.0cm}\section{Applications and Numerical Experiment}\label{sec:Applications}\onehalfspacing
\subsection{Subjective expected utility}\label{subsec:SEU}
This example is motivated by \citet{Karni2017} who argued that state dependent utility functions are supported by \mycite{Savage1972}  postulates. In the context of \cite{Savage1972} Subjective Expected Utility Theory (SEU) there exist a set of acts $F$ (by abuse of notation this is different from the distribution function $F$), a (discrete) set of states $\emph{S}$, and a subjective probability distribution $\pi$ on $\emph{S}$ such that $f\in F$, $s\in \emph{S}$, and $f(s)=x$. Acts map states into a space of outcomes $X$ called consequences. So $F=X^S$. DMs evaluate the expected utility $\sum_s \pi(s)u(f(s))$. We assume the usual binary relation $\succ$ to mean ``strictly prefer" and $\sim$ to mean ``indifferent to" and $\succeq$ to mean ``weakly preferred to". So, in SEU, $u(f(s))$ is a ``taste" because it represents the decision maker's state-independent valuation of consequences, while $\pi(s)$ captures beliefs about which state occurs. In our set up in \Cref{thm:MachinaInfDimPreFunc}, $u(f(s))$ is replaced by local utility $u_j(s)$ which is state dependent. The scalar belief $\pi(s)$ is replaced with the Wiener-integral weight $v_j(s,P_\star)$, or with its nonnegative normalized version when a genuine probability is required. Thus the Gaussian measure $P_\star$ is a canonical reference measure in abstract Wiener space, but the local weight is state dependent through $s_{1j}$.

\tab In SEU theory, beliefs are probabilities on states and are separated from tastes over consequences (see \citep{Baccelli2020} and references therein), while our $v_j(s,P_\star)$ is state dependent by construction. A useful analogy is the kernel technique in \citet[p.~47]{RaiffaSchlaifer1961}: if a nonnegative kernel $\kappa(s|\,P_\star)$ and a normalizing function $N(s)$ satisfy $N(s)\kappa(s|\,P_\star)\propto s^+_{1j}P_\star$, then the normalized positive part of the Wiener weight behaves like a posterior density with reference measure $P_\star$. For example, in \eqref{eq:SignedChangeOfMeasure} below, $s^+_{1j}$ denotes the positive part of a normalized signed measure. So that the raw Wiener weight is equivalent to a signed change of measure, while the normalized positive part gives the probability measure needed for choice weights. The logic becomes:
\onehalfspacing
\vspace*{-2ex}\begin{itemize}
	\item Start with a nonnegative kernel 
	\( \kappa(s \mid P^\star) \).
	
	\item Normalize it via a function 
	\( N(s) \).
	
	\item Obtain a density relative to the reference measure 
	\( P^\star \):
	\[
	f(s \mid P^\star)
	=
	N(s)\,\kappa(s \mid P^\star),
	\quad
	\text{with }
	\int f(s \mid P^\star)\, dP^\star(s) = 1.
	\]
	
	\item The normalized positive Wiener weight behaves analogously,
	inducing a posterior-like reweighting relative to 
	\( P^\star \).
\end{itemize}\onehalfspacing\vspace*{-1ex}
 Because these normalized weights depend on the state functions, they generally violate the separability between beliefs and tastes required by \citet{Savage1972}, \citet[pp.~34-35]{Kreps1988} and by \citet{AbdellaouiWakker2020} more recent generalized SEU theory. Thus, we provide an example of a decision space where \mycite{Savage1972} SEU applies only after replacing the state-dependent weights with an exogenous probability measure such as $P_\star$.
\subsection{Cumulative Prospect Theory}\label{subsec:CPT}
One of the linchpins of \citet{TverKahn1992} Cumulative Prospect Theory (CPT) is the use of \mycite{Quiggin1982} Rank Dependent Utility (RDU) transformation of nonlinear probability into linear decision weights.
The additive linearized weights produced by RDU \citep[Chapters~7-8]{Wakker2010} require ordered, nonnegative probabilities that sum to one. Let $(w\circ v_j)(P_\star)$ be a composite weighting functional for some probability weighting function $w(\mydot)$. Since $v_j$ is a Wiener integral, the composite function is also a Wiener functional; however, $v_j$ need not be nonnegative and need not sum to one across states. Therefore RDU cannot be applied directly to the raw Wiener-integral weights. One must first pass to normalized nonnegative weights, such as $p_j=v_j^+/\sum_kv_k^+$ on the event $\sum_kv_k^+>0$. Once this normalization is imposed, the usual RDU linearization scheme is
\begin{equation*}
  \begin{split}
    &\pi_1 = w(p_1),\quad\pi_2 = w(p_1+p_2)-w(p_1),\quad\pi_3=w(p_1+p_2+p_3)-w(p_1+p_2),\ldots\\
    &\pi_j = w(p_1+p_2+\cdots+p_j)-w(p_1+p_2+\cdots+p_{j-1}),\ldots,\;\;\pi_n = 1-w(p_1+p_2+\cdots+p_{n-1})\nonumber
  \end{split}
 \end{equation*}
where $\pi(0)=0$ and $w\left(\sum_{j=1}^{n}p_j\right)=1$. With normalized probabilities the $\pi_j$ sum to one by construction. Without that normalization, the raw Wiener weights are not a probability vector and the RDU interpretation fails.
\subsubsection{Rank Dependent Utility decision weights and the law of the iterated logarithm}\label{subsubsec:RDUdecwgts}
As shown above, without more, the raw Wiener weights are not RDU probabilities. However, after normalization their centered fluctuations can be studied in the context of the law of the iterated logarithm\footnote{The interested reader is directed to \citet[\S 10.2]{ChowTiecher1988} for details on construction of Kolmogorov's LIL.} (LIL) in Banach spaces. To see this, we begin with \citet[Thm~5.3]{Kuo1975} which states that $(\iota,C^\prime[0,1],C[0,1])$, where $C[0,1]$ is the space of continuous functions on $[0,1]$ and $C^\prime[0,1]$ is the space of first derivatives of functions defined in $C[0,1]$, is an abstract Wiener space. In particular, if we endow $C^\prime[0,1]$ with the inner product norm, we have, for some derivative operator $D$ and probability weighting functional $w(\mydot)$, that $<Dw,dP_\star>=\int Dw\,dP_\star$ is a stochastic integral--in this case a Wiener integral. The canonical norm in $C[0,1]$ is the sup-norm $\|f\|=\sup_{0\leq t\leq 1}|f(t)|$. For $\iota: C^\prime[0,1]\hookrightarrow C[0,1]$ we have from the Cauchy-Schwarz inequality
\begin{equation*}
   |<Dw,dP_\star>|=\left|\int DwdP_\star\right|\leq\left\|\int Dw\right\|\left\|\int dP_\star\right\|\leq \sup\left\|\int Dw\right\|\sup\left\|\int dP_\star\right\|=\sup|w|=\|w\|
\end{equation*}
This implies that $\|\mydot\|$ is measurable over $C^\prime[0,1]$. The inclusion map embeds $(C^\prime[0,1],<,>)$ in $(C[0,1],\|\mydot\|)$. Furthermore, for given normalized weights $p_1,\,p_2$ and $D$ a Fr\'{e}chet derivative, we have
\begin{equation}
   \pi_2(P_\star)=\pi(p_1,p_2;\,P_\star)=w(p_1+p_2;\,P_\star)-w(p_1;\,P_\star)=p_2Dw(p_1; P_\star)+o(\|p_2\|)\label{eq:BanachRV}
\end{equation}
Here, $\pi_2$ is a Banach valued random variable in $C^\prime[0,1]$. It is also an increment of the Wiener functional $w(\mydot)$ so it is independent. Furthermore,
\begin{equation}
   E[\pi_2(P_\star)]=E[p_2Dw(p_1)+o(\|p_2\|)],\quad E[\pi^2_2(P_\star)]=E[(p_2Dw(p_1)+o(\|p_2\|))^2]<\infty\label{eq:BanachLILprecond}
\end{equation}
provided the second moment exists. The nonzero mean comes from the positive-part normalization of the raw Wiener weights. By induction, the relation in \eqref{eq:BanachRV} extends to all decision weights $\pi_1,\pi_2,\ldots,\pi_n$. Let $Y_j=p_1+p_2+\cdots+p_j$ so that $\pi_j=w(Y_{j-1}+p_j)-w(Y_{j-1})=p_jD\,w(Y_{j-1})+o(||p_j||)$, and $S_n(P_\star)= \sum^n_{j=1}\pi_j(P_\star)=\sum^n_{j=1}p_jD\,w(Y_{j-1})+\sum^n_{j=1}o(||p_j||)$. For Fr\'{e}chet derivative $D$, the additive decision weights relationship involves Clark-Haussmann-Ocone formula outside the scope of the present paper. Cf. \citet{Ocone1984}. The inclusion map embeds $\pi_j,\,\forall j$ in the Banach space $(C[0,1],\|\mydot\|)$ which is separable by hypothesis. The foregoing analyses in \eqref{eq:BanachRV} and \eqref{eq:BanachLILprecond} support the following. Define the centered decision weights
\begin{align}
   \widetilde{\pi}_j=\pi_j-E[\pi_j],\quad E[\widetilde{\pi}^2_j]=E(\pi_j-E[\pi_j])^2<\infty
\end{align}
The centered decision weights support the following proposition adapted from \citet{Kuelbs1977}.
\onehalfspacing\begin{prop}[RDU decisions weights and LIL]\label{prop:RDUdecwgtsLIL}~\newline
  Let $\widetilde{\pi}_1,\widetilde{\pi}_2,\ldots$ be an independent Banach-space valued sequence of centered decision-weight fluctuations satisfying the hypotheses of the Banach-space LIL, including $E[\widetilde{\pi}]=0$ and $E[\|\widetilde{\pi}\|^2]<\infty$. Then
  \begin{equation*}
    \text{P}\left\{\limsup_{n\to\infty}\dfrac{\|\widetilde{S}_n\|}{\widetilde{a}_n}=1\right\}=1
  \end{equation*}
where $\widetilde{S}_n=\widetilde{\pi}_1+\widetilde{\pi}_2+\ldots+\widetilde{\pi}_n$, $\widetilde{\sigma}^2_n=\sum^n_{j=1}\|\widetilde{\pi}_j\|^2>\,1$ and $\widetilde{a}_n=\sqrt{2\widetilde{\sigma}^2_n\log\log\widetilde{\sigma}^2_n}$
\end{prop}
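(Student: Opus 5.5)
The plan is to reduce \cref{prop:RDUdecwgtsLIL} to Kuelbs' Banach-space law of the iterated logarithm for independent, non-identically distributed summands \citep{Kuelbs1977}. I will check its hypotheses for the sequence \(\widetilde{\pi}_j\) viewed as random elements of the separable Banach space \((C[0,1],\|\cdot\|)\), into which \(C^\prime[0,1]\) is embedded by \(\iota\).

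\textbf{Step 1: setting.} Fix the abstract Wiener space \((\iota,C^\prime[0,1],C[0,1])\) of \citet[Thm~5.3]{Kuo1975}. Its target \(C[0,1]\) is separable, so Borel measurability and Bochner integrability make sense. By the expansion in \eqref{eq:BanachRV}, each \(\pi_j\) is a Borel random element there. By \eqref{eq:BanachLILprecond} it has a finite second moment, so \(E[\pi_j]\) exists as a Bochner integral. Centering then gives \(E[\widetilde{\pi}_j]=0\) and \(E\|\widetilde{\pi}_j\|^2\le 4E\|\pi_j\|^2<\infty\). Independence is taken from the statement. It is also motivated by the paper's reading of \(\pi_j\) as increments of the Wiener functional \(w\), and centering preserves it.

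\textbf{Step 2: normalization.} Set \(\widetilde{\sigma}_n^2=\sum_{j\le n}\|\widetilde{\pi}_j\|^2\). I need \(\widetilde{\sigma}_n^2\to\infty\), so that \(\log\log\widetilde{\sigma}_n^2\) is defined and eventually positive; the condition \(\widetilde{\sigma}_n^2>1\) in the statement handles the small-\(n\) range. I also need the Kolmogorov-type truncation condition \(\|\widetilde{\pi}_n\|=o\bigl(\widetilde{\sigma}_n/\sqrt{\log\log\widetilde{\sigma}_n^2}\bigr)\), or its bounded-in-probability analogue. These are exactly the ``hypotheses of the Banach-space LIL'' assumed in the statement, so they are invoked rather than derived. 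The one-dimensional template is \citet[\S10.2]{ChowTiecher1988}.

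\textbf{Step 3: the Banach-space argument.} Kuelbs' theorem has two ingredients, and I will verify both.

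First, there is a compactness condition. The partial sums \(\widetilde{S}_n/\widetilde{a}_n\) must be relatively compact a.s. For this I approximate \(C[0,1]\)-valued summands by finite-dimensional projections, for example piecewise-linear interpolation on dyadic grids. I then apply the scalar Kolmogorov LIL coordinatewise. The projection error is controlled uniformly through the second-moment bound and a maximal inequality (Lévy/Ottaviani) for independent symmetric sums.

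Second, there is an upper/lower limit identification. The upper bound \(\limsup\le1\) follows from exponential tail bounds along a geometric subsequence \(n_k\) with \(\widetilde{\sigma}^2_{n_k}\approx\rho^k\), combined with Borel–Cantelli. The lower bound \(\limsup\ge1\) comes from the second Borel–Cantelli lemma applied to independent blocks \(\widetilde{S}_{n_{k+1}}-\widetilde{S}_{n_k}\). For this I choose a norming functional \(\ell\in C[0,1]^*\) with \(\|\ell\|=1\) that nearly attains the relevant norm, and apply the scalar lower-class result to \(\ell(\widetilde{\pi}_j)\).

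\textbf{Main obstacle.} I expect the hardest part to be the normalization. Kuelbs' constant 1 is normally stated with \(\sigma^2\) defined through \(\sup_{\|\ell\|\le1}E\,\ell(\widetilde\pi)^2\), not through the pathwise \(\sum\|\widetilde\pi_j\|^2\) used here. I would first try to show that the two are asymptotically equivalent a.s. by a strong law for \(\|\widetilde\pi_j\|^2\) under the moment hypothesis. If that fails, I would weaken the conclusion to \(\limsup\le1\) a.s. Alternatively, I would assume, as part of the ``hypotheses of the Banach-space LIL'', that the weak and strong variances agree, which is what makes the limit exactly 1.
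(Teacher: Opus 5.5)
Your top-level route is the paper's, whose entire proof is the citation of Kuelbs (1977, Thm.~4.2(1)). You also located the right obstacle. The plan nevertheless cannot deliver the displayed identity as stated, for two reasons.

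\emph{The normalization obstacle is fatal, and your primary fix cannot work.} Kuelbs's Kolmogorov-type LIL is normalized by the deterministic weak variance $s_n^2=\sup_{\|\ell\|\le1}\sum_{j\le n}E\,\ell(\widetilde\pi_j)^2$. A strong law for $\|\widetilde\pi_j\|^2$ can hold under Kolmogorov's boundedness condition, but it only gives $\widetilde\sigma_n^2\sim\sum_{j\le n}E\|\widetilde\pi_j\|^2$ a.s. Moreover $E\|\widetilde\pi_j\|^2\ge\sup_{\|\ell\|\le1}E\,\ell(\widetilde\pi_j)^2$, typically strictly.

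Here is a counterexample. Take nonnegative, disjointly supported $\phi_1,\phi_2\in C[0,1]$ of sup norm $1$, and set $\widetilde\pi_j=U_{1j}\phi_1+U_{2j}\phi_2$ with the $U_{ij}$ i.i.d.\ uniform on $[-1,1]$. These summands are bounded and lie in a two-dimensional subspace, so every hypothesis of Kuelbs's theorem holds. Yet the weak variance per term is $1/3$, while $E\|\widetilde\pi_j\|^2=E\max(U_{1j}^2,U_{2j}^2)=1/2$. Hence $\limsup_n\|\widetilde S_n\|/\widetilde a_n=\sqrt{2/3}$ a.s., and the proposition is false as literally stated.

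With $\widetilde\sigma_n^2=\sum_{j\le n}\|\widetilde\pi_j\|^2$, only $\limsup\le1$ is provable (your option (b)). Equality needs one of two restatements:
\begin{itemize}
\item replace $\widetilde\sigma_n^2$ by Kuelbs's $s_n^2$; or
\item add the hypothesis $s_n^2\sim\sum_{j\le n}E\|\widetilde\pi_j\|^2$ (your option (c)). This essentially requires a single norming functional to capture almost all of $\sum_{j\le n}E\|\widetilde\pi_j\|^2$.
\end{itemize}
The paper's bare citation passes over exactly this mismatch, so one of these restatements is also needed for its citation to be valid.

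\emph{Step 3 should be dropped, not repaired.} Once you invoke Kuelbs, you need not re-prove his theorem, and the compactness half fails as sketched. $C[0,1]$ contains $\ell_1$ isometrically and so has no type 2. Second moments, even uniform boundedness, therefore do not control the tails of finite-dimensional approximations. L\'evy--Ottaviani inequalities only transfer control from $\|\widetilde S_n\|$ to $\max_{k\le n}\|\widetilde S_k\|$; they do not create it.

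For example, let $X=\varepsilon e_N$ in $\ell_1$, with $\varepsilon$ a Rademacher sign independent of $N$ and $P(N=k)\propto(k\log^2k)^{-1}$ for $k\ge2$. Then $\|X\|=1$ and the weak variance is $1$, yet $\|S_n\|$ is of order $n/\log n$. Your exponential upper bound has the same dependence, since it needs $E\|\widetilde S_n\|=o(\widetilde a_n)$.

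This is why Kuelbs assumes $\widetilde S_n/\sqrt{2s_n^2\log\log s_n^2}\to0$ in probability, together with $\|\widetilde\pi_j\|\le\epsilon_js_j/\sqrt{\log\log s_j^2}$ and $\epsilon_j\to0$. List these explicitly among the ``hypotheses of the Banach-space LIL'' rather than trying to derive them from $E\|\widetilde\pi_j\|^2<\infty$.
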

\begin{proof}
  See \citet[Thm~4.2(1)]{Kuelbs1977}
\end{proof}
\begin{rem}
    \citet{KuelbsZinn2020a,KuelbsZinn2020b} show the di???erence between the centering constants
and the median of the related partial maxima for sums, $\max_{\{1\leq k\leq n\}}\left\{S_k/\sqrt{k}\right\}$, is asymptotically zero, under the assumption that the Gaussian random variable is non-degenerate.
\end{rem}
\onehalfspacing
\subsection{Local expected utility}\label{subsec:LocalUtilMax}
\tab \citet[pp.~294-295]{Machina1982} proffered $V(\cdot)$ as a nonlinear functional, and posited
\begin{equation}
 V(F^\star)-V(F)= \int U(x;\,F)(dF^\star(x)-dF(x))+o\left(\|F^\star-F\|\right) \label{eq:MachinaLocalUtilDiff}
\end{equation}
as a first order expansion of a Fr\'{e}chet derivative. If $F^\star\succeq F$, then $V(F^\star)-V(F)\geq 0$. In our case, \cref{thm:MachinaInfDimPreFunc} tells us that
\begin{equation}
V(F^\star)-V(F)=\sum^{\infty}_{j=0}\,(\widehat{u}^\star_j-\widehat{u}_j)v_j(s,P_\star)\label{eq:AbstractWienerUtilDiff}
\end{equation}
is a Wiener functional where $\widehat{u}^\star_j,\;\widehat{u}_j$ and $v_j$ depend on the state $s_{1,j}(x)$. So, if $\sum^{\infty}_{j=0}\,(\widehat{u}^\star_j-\widehat{u}_j)v_j(s,P_\star)>0$, then probabilistically, our DM  weakly prefers the local utilities $\widehat{u}^\star_j$, in the coordinate vector $\widehat{\mathbf{u}}^\star$ for $V(F^\star)$, to $\widehat{u}_j$ in the coordinate vector $\mathbf{u}$  for $V(F)$. Because the inequality is stochastic, it should be interpreted as a positive-probability preference event rather than a deterministic ordering. This invokes the weak stochastic transitivity axiom (WSTA) of \citet{Tversky1969}. Unlike Machina's local utility in \eqref{eq:MachinaLocalUtilDiff}, the aggregated local utility in \eqref{eq:AbstractWienerUtilDiff} can be subject to Condorcet-type aggregation effects \citep{Condorcet2014}. Thus abstract Wiener space poses a challenge to deterministic transitivity, and stochastic transitivity becomes the relevant benchmark. In the sequel we drop the $s$ from $v_j(s,P_\star)$ for notational convenience.

\begin{figure}[!htb!]
   \centering
   \begin{minipage}[h]{0.4\linewidth}
      \captionof{figure}{Simulated $Q$-meas for $2\mathcal{N}(0,\|s_{1,1}\|^2)\star\chi_{\{Q_1\geq 0\}}$}
      \label{fig:QsimU11}
      \centerline{\includegraphics[scale=.6]{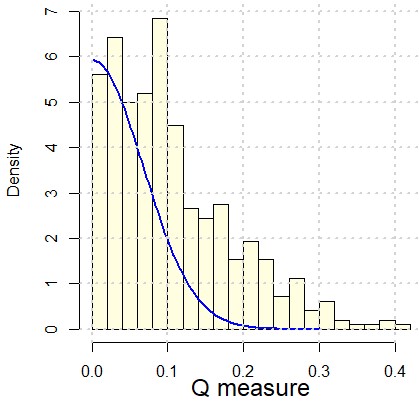}}
   \end{minipage}
   \hspace*{1.8cm}
   \begin{minipage}[h]{0.4\linewidth}
      \captionof{figure}{Nonlinear $Q$-measure drawn for $2\mathcal{N}(0,\|s_{1,1}\|^2)\star\chi_{\{Q_1\geq 0\}}$}
      \label{fig:QmeasU11}
      \vspace*{0.25cm}\centerline{\includegraphics[scale=.6]{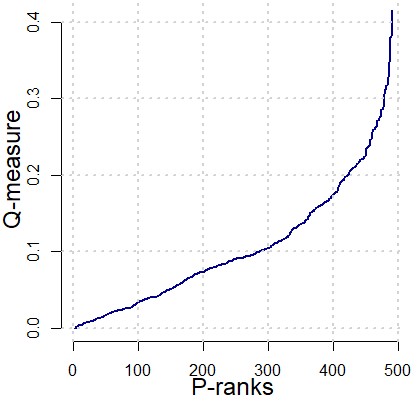}}
   \end{minipage}
   \begin{minipage}[h]{0.9\textwidth}
   \vspace*{0.5cm}
      \footnotesize\textit{Notes:} \cref{fig:QsimU11} depicts an experiment with 1000 draws from $\mathcal{N}(0,\|s_{1,1}\|^2)$ conditioned on nonnegative realizations. So, the positive part has half-normal density proportional to $2\mathcal{N}(0,\|s_{1,1}\|^2)\chi_{\{Q_1\geq 0\}}$, where $\chi_{\{\mydot\}}$ is a characteristic function. \cref{fig:QmeasU11}  portrays the nonlinear nature of the normalized positive change of measure theorized in \cref{prop:NonlinearQ}.
   \end{minipage}
\end{figure}

\tab Another way to see this is to define a signed measure $\widetilde{Q}_j$ absolutely continuous with respect to $P_\star$. According to the Radon-Nikodym theorem \citep[p.~78]{GikhmanSkorokhod1969}, $d\widetilde{Q}_j/dP_\star = s_{1j}$. Since $s_{1j}$ may take negative values, $\widetilde{Q}_j$ is generally a signed measure rather than a probability measure. A probability measure is obtained by taking the positive part and normalizing:
\begin{equation}
   \frac{dQ^+_j}{dP_\star}
   =\frac{s^+_{1j}}{\int s^+_{1j}(x)dP_\star(x)},\qquad
   s^+_{1j}(x)=\max\{s_{1j}(x),0\},\label{eq:SignedChangeOfMeasure}
\end{equation}
provided $\int s^+_{1j}(x)dP_\star(x)>0$. Thus the raw Wiener weight is equivalent to a signed change of measure, while the normalized positive part gives the probability measure needed for choice weights. Substitution in \cref{thm:MachinaInfDimPreFunc} gives us $V(F^\star)-V(F)=\sum^{\infty}_{j=0}\,(\widehat{u}^\star_j-\widehat{u}_j)\widetilde{Q}_j$ in signed-measure form, or the corresponding normalized expression with $Q^+_j$ when nonnegative probabilities are required. Rank dependent utility \citep{Quiggin1982,Quiggin1993}, cumulative prospect theory \citep{TverKahn1992}, quantum decision theory \citep{BusemeyerWangTownsend2006} and \citet{Allais1953} provide different specifications for nonlinear probabilities compared to $Q^+_j$.  Thus, we proved
\begin{prop}[Nonlinear probability representation]\label{prop:NonlinearQ}
   Let $\mathcal{N}(\cdot)$ denote the normal density. The raw Radon-Nikodym derivative $s_{1j}$ defines a signed measure $\widetilde{Q}_j$ with respect to Wiener measure $P_\star$. If a probability measure is required, the normalized positive part $Q^+_j$ is well defined whenever $\int s^+_{1j}dP_\star>0$. In the one-dimensional Gaussian projection, the associated positive Wiener-integral weights have a half-normal form proportional to $2\mathcal{N}(0,\|s_{1,j}\|^2)\chi_{\{Q_j\geq 0\}}$. This half-normal form describes the distribution of the positive Wiener weight $v_j^+$ under the Gaussian reference measure; it does not describe the deterministic Hermite state-feature function $s_{1j}$ itself. \qed
\end{prop}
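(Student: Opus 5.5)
The argument splits into three claims, taken in order: the signed-measure construction, the well-definedness of the normalized positive part, and the half-normal form of the positive weight in the one-dimensional Gaussian projection.

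\emph{Signed measure.} Since $s_{1j}$ is a real, square-integrable Hermite function, it lies in $L^2(P_\star)\subset L^1(P_\star)$ after the Gaussian projection. So I define
\[
\widetilde Q_j(A)=\int_A s_{1j}\,dP_\star
\]
for Borel $A$. Dominated convergence gives countable additivity. The total variation is $\int|s_{1j}|\,dP_\star<\infty$, so $\widetilde Q_j$ is a finite signed measure. It is absolutely continuous with respect to $P_\star$, and by construction its Radon--Nikodym derivative is $s_{1j}$. Because Hermite functions of order $n\ge1$ change sign, $\widetilde Q_j$ is generally not a probability measure. The Jordan decomposition $\widetilde Q_j=\widetilde Q_j^+-\widetilde Q_j^-$ has densities $s_{1j}^{\pm}$.

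\emph{Normalized positive part.} Set $c_j=\int s_{1j}^+\,dP_\star$, which is finite because $0\le s_{1j}^+\le|s_{1j}|$. On the event $c_j>0$, the density in \eqref{eq:SignedChangeOfMeasure} is nonnegative, measurable, and integrates to one. Hence $Q_j^+$ is a probability measure absolutely continuous with respect to $P_\star$. I would add that $c_j>0$ holds automatically unless $s_{1j}\le0$ $P_\star$-a.e. This follows because the Gaussian measure charges every open set and $s_{1j}$ is continuous and not identically nonpositive. On the half range $[0,2]$ one checks the sign of $H_n$ near $0^+$.

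\emph{Half-normal form.} Here I use \cref{thm:MachinaInfDimPreFunc}, which gives that $v_j$ is a centered Gaussian Wiener integral with variance $\|s_{1j}\|^2$, i.e.\ $v_j=\int s_{1j}\,dW$ under $P_\star$ (the Paley--Wiener isometry). Then $v_j^+=\max\{v_j,0\}$ has an atom of mass $1/2$ at zero. Conditional on $\{v_j\ge0\}$, its density is $2\,\mathcal N(0,\|s_{1j}\|^2)(y)\chi_{\{y\ge0\}}$, because the normal density is symmetric. I would state the result as this conditional law, which is the object simulated in \cref{fig:QsimU11}. This also justifies the final sentence of the proposition: the half-normal describes the random weight $v_j^+$, not the deterministic function $s_{1j}$.

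\emph{Main obstacle.} The hard step is making precise the link between the Radon--Nikodym density $s_{1j}$ and the random variable $v_j$. I would try to resolve it by identifying $v_j$ with the Paley--Wiener integral of the Cameron--Martin element corresponding to $s_{1j}$, which is Gaussian with variance equal to its squared norm. I would also make the ``proportional to'' wording honest by separating the point mass at zero from the absolutely continuous half-normal part.
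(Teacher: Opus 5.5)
Your proposal is correct and follows the paper's route. The paper likewise takes $s_{1j}$ as the density of a signed measure $\widetilde Q_j$ with respect to $P_\star$ and normalizes the positive part under $\int s^+_{1j}\,dP_\star>0$ as in \eqref{eq:SignedChangeOfMeasure}. It then obtains the half-normal form by conditioning the centered weight $v_j$, with variance $\|s_{1j}\|^2$ from \cref{thm:MachinaInfDimPreFunc}, on nonnegative realizations; this is the construction behind \cref{fig:QsimU11}. Your explicit separation of the atom of mass $1/2$ at zero from the conditional density $2\,\mathcal N(0,\|s_{1j}\|^2)$ on $[0,\infty)$ usefully sharpens the paper's ``proportional to'' wording.

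Your optional aside on $c_j>0$ is unnecessary, since the statement assumes it. Its suggested check of the sign of $H_n$ near $0^+$ would also mislead for $H_3=8x^3-12x$, which is negative on $(0,\sqrt{3/2})$ and has its positive part on $[0,2]$ only beyond $\sqrt{3/2}$.
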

\cref{fig:QsimU11,fig:QmeasU11} depict a simple experiment with simulated distributions articulated in \cref{prop:NonlinearQ}. For instance, in \cref{fig:QmeasU11}, after around the 400 probability rank,  probabilities are quite nonlinear. In other words, larger probabilities jump more than lower probabilities.

\begin{figure}[!htb!]%
   \centering
   \begin{minipage}[h]{0.4\linewidth}
      \captionof{figure}{State functions}
      \label{fig:AbstWherm3int0}
      \centerline{\includegraphics[scale=.5]{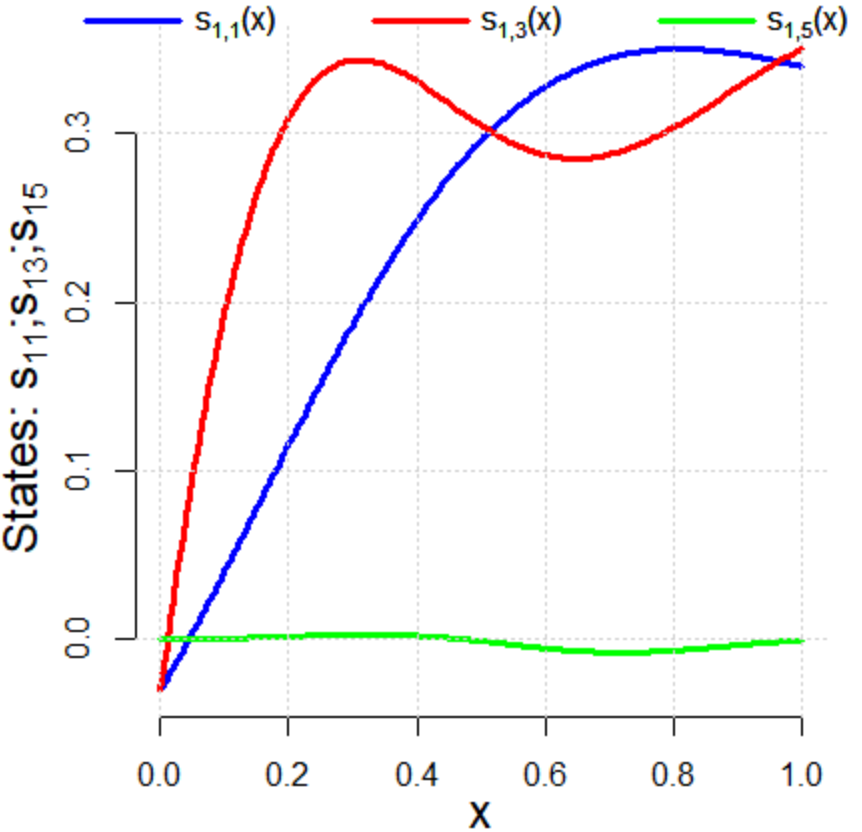}}
   \end{minipage}
   \hspace*{1.8cm}
   \begin{minipage}[h]{0.4\linewidth}
      \captionof{figure}{Approximate recovery of $U(x; \mathbf{u})$}
      \label{fig:AbstWherm3CRRA4}
      \vspace*{0.25cm}\centerline{\includegraphics[scale=.55]{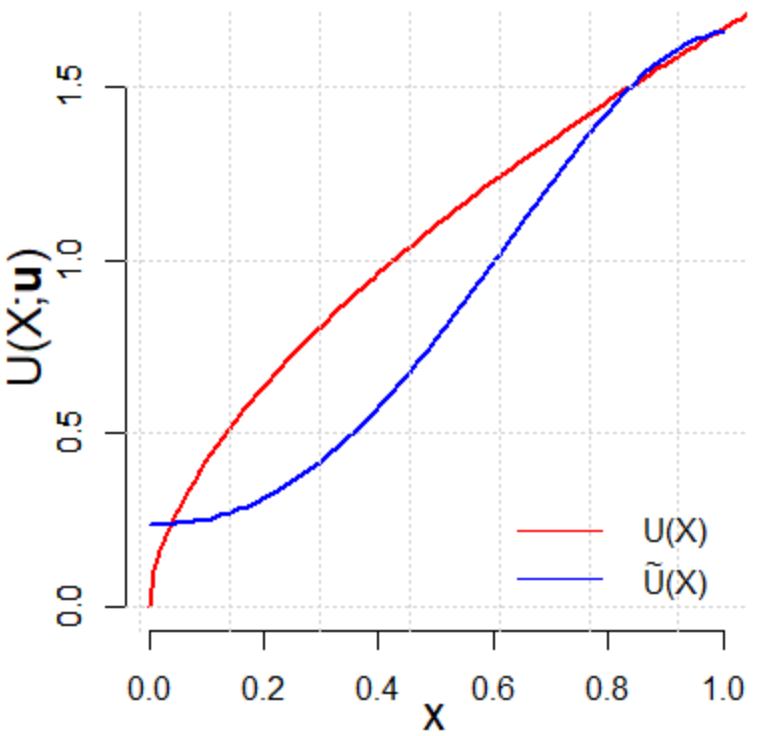}}
   \end{minipage}
   \begin{minipage}[h]{0.9\textwidth}
   \vspace*{0.5cm}
      \footnotesize\textit{Notes:} Assuming that positive feature loadings are evaluated more favorably than negative feature loadings, \cref{fig:AbstWherm3int0} depicts the ordering $s_{1,3}(x)\succeq s_{1,1}(x)\succeq s_{1,5}(x)$ for $x\in[0,1]$. This is a ranking of local feature contributions to acts, not a primitive preference over states themselves.
      \cref{fig:AbstWherm3CRRA4} plots the CRRA utility function $U(x; a)=x^{1-r}/(1-r),\; r=0.4$, and its approximation is given by $\widetilde{U}(x; \mathbf{u})\approx \widetilde{u}_1\,s_{1,1}(x)+\widetilde{u}_3\,s_{1,3}(x)+\widetilde{u}_5\,s_{1,5}(x)$ where local utility  $\widetilde{u}_1\approx 6.45$, $\widetilde{u}_3\approx -890.63$ and $\widetilde{u}_5\approx 670320.8$ for $x\in[0,1]$. The local utility ranking $u_5\succ u_1\succ u_3$ seems to be independent of the state ranking \citep{Karni2017}. Let $\chi_a$ be an indicator function for $a$. The plotted values of $\widetilde{u}$ were constructed with the following affine transformation $\widetilde{u}(x)*\chi_{\{\widetilde{u}(x)>0\}}*\left(\max_x{U(x)}\chi_{\{x<1\}}/\max{\widetilde{u}(x)}\right).$
   \end{minipage}
\end{figure}
\tab Similar to \mycite{Machina1982} DM's local utility $U(x; F)$ in \eqref{eq:MachinaLocalUtilDiff}, our DM in \eqref{eq:AbstractWienerUtilDiff} behaves like a local expected utility maximizer with respect to the signed local expected utility functional $\widehat{u}_j\widetilde{Q}_j$ on abstract Wiener space, or with respect to $\widehat{u}_jQ^+_j$ after nonnegative normalization. According to \cref{cor:WienerIntegral}, $v_j(s,P_\star)$ is a Wiener integral so it is highly nonlinear, and $V(F)$ inherits the nonlinearity and Wiener integral feature. Furthermore, $V(F)$ is also infinite dimensional by virtue of being a Wiener integral \citep{Shilov1963}.
\subsection{Estimation}\label{subsec:Estimation}
\tab \cref{fig:AbstWherm3int0} depicts an experiment with the Gauss-Hermite state functions $s_{1,1}(x); s_{1,3}(x); s_{1,5}(x)$  according to \eqref{eq:BehavioralBasisFunction} for $H_1(x)=2x$, $H_3(x)=-12x + 8x^3$ and $H_5(x)=32x^5 -160x^3 +120x$. The plotted correction in \cref{fig:AbstWherm3CRRA4} is defined on $[0,1]$; the full-interval numerical estimates below use $[0,2]$. We estimated the equation $\widetilde{U}(x; a) \approx \widetilde{u}_1s_{1,1}(x)+\widetilde{u}_3s_{1,3}(x)+\widetilde{u}_5s_{1,5}(x)$ for CRRA utility with risk aversion coefficient calibrated at $r=0.4$ in accord with \citet[p.~98]{Arrow1971} and the experimental literature \citep{HoltLaury2002}. We used numerical integration to estimate local utility $\widetilde{u}_j=\int^2_0U(x)s_{1,j}(x)dx\,/\,\int^2_0s_{1,j}^2(x)dx,\; j=1,3,5$ by virtue of \eqref{eq:AbstractFourierCoeff}. Identifying restrictions on $U$ require that even values $a_{2k}=0, k=1,\ldots$. Also, for our purposes the Hermite polynomials $H_5(x)$ and above did not contribute much on our intervals of interest as its value was about $6.81\times 10^{-6}$. Nonetheless, we find $\widetilde{u}_1\approx 8.65;\; \widetilde{u}_3\approx 226.95; \;\widetilde{u}_5\approx -409579$. So, we use $\widetilde{u}_1,\;\widetilde{u}_3$.  We note in passing that $K_2=\widetilde{u}^2_1+\widetilde{u}^2_3\approx 51580.46$ and under \cref{cor:EstimatedVF}, the estimate 
\[
\begin{aligned}
\|V(F)\|&\leq K^{\frac{1}{2}}_2\sqrt{\sup_j\int^2_0s^2_{1,j}(x).dx}\\ &= K^{\frac{1}{2}}_2\sqrt{\sup\{\int^2_0s^2_{1,1}(x).dx,\;\int^2_0s^2_{1,3}(x).dx\}}\\ &=227.11\sqrt{\sup\{ 6.73\times 10^{-2} ,\,  1.70\times 10^{-3}\}}\approx 15.22
\end{aligned}
\]

 on the interval $[0,2]$. This is not a sharp estimate. In fact, it is a very conservative estimate since it is much higher than 2 in \cref{fig:AbstWherm3CRRA4}, and  higher order states are not included. %
\subsection{Elicitation and testable restrictions}\label{subsec:ElicitationTestability}
\tab A finite elicitation exercise can be organized as follows. First, fix a truncation level $J$ and a basis $\{s_{1j}\}_{j=1}^{J}$. The basis may be chosen ex ante, as in the Hermite approximation used below, or estimated from a training sample of payoff paths by an orthogonalization or principal-component step. Second, construct acts whose payoff paths differ only in known loadings on these basis features. For example, an act may be written as
\begin{equation}
   f_{\mathbf{a}}(x)=\sum_{j=1}^{J} a_j s_{1j}(x),
\end{equation}
where the loading vector $\mathbf{a}=(a_1,\ldots,a_J)^{\prime}$ is controlled by the experimenter. Third, elicit the local utility scale from simple constant or single-feature payoff paths, and then estimate the composite coefficients from certainty-equivalent, binary-choice, or pricing tasks using the projected utility representation
\begin{equation}
   U(f_{\mathbf{a}})\approx \sum_{j=1}^{J} u_j a_j .
\end{equation}
Fourth, recover normalized decision weights from repeated choices across acts with the same projected utility but different state-feature exposure. The design is identified when the matrix of observed loadings has full column rank; overidentifying restrictions are obtained by presenting more acts than the number of active feature coefficients.

\tab A natural stochastic-choice implementation is the Luce choice rule \citep{Luce1959}, or equivalently the softmax/logit rule used in modern stochastic-choice analysis \citep{CerreiaVioglioMaccheroniMarinacciRustichini2023}. Define the projected value
\begin{equation}
   V_J(f_{\mathbf{a}})=\sum_{j=1}^{J}p_j u_j a_j,
   \qquad
   p_j=\frac{v_j^+}{\sum_{k=1}^{J}v_k^+}.
\end{equation}
For any finite menu $B\subset\mathcal{A}_J$, the Luce probability of choosing act $f_{\mathbf{a}}\in B$ is
\begin{equation}
   \label{eq:LuceMenuChoice}
\Pr(f_{\mathbf{a}}\mid B)
   =
   \frac{\exp\{\gamma V_J(f_{\mathbf{a}})\}}
   {\sum_{f_{\mathbf{b}}\in B}\exp\{\gamma V_J(f_{\mathbf{b}})\}},
   \qquad \gamma>0.
\end{equation}
In the binary menu $B=\{f_{\mathbf{a}},f_{\mathbf{b}}\}$, this reduces to
\begin{equation}
   \label{eq:LuceBinaryChoice}
\Pr(f_{\mathbf{a}}\succeq f_{\mathbf{b}})
   =
   \Lambda\left(
      \gamma\sum_{j=1}^{J}p_j u_j(a_j-b_j)
   \right),
\end{equation}
where $\Lambda(z)=(1+\exp\{-z\})^{-1}$. Thus the binary logit specification is the Luce rule applied to projected abstract-Wiener acts, and $p_j$ remains the admissible nonnegative normalization of the positive part of the signed Wiener weight. Define the composite coefficient vector
\[
   \boldsymbol{\theta}
   =
   (\theta_1,\ldots,\theta_J)^{\prime},
   \qquad \theta_j=p_ju_j,
\]
so that $V_J(f_{\mathbf{a}})=\boldsymbol{\theta}^{\prime}\mathbf{a}$. For choice occasion $i$, let $B_i\subset\mathcal{A}_J$ denote the finite menu of projected acts offered to the decision maker, and let $f_{\mathbf{a}_i}\in B_i$ denote the observed chosen act. Thus a sample of $N$ menu choices is $\{(B_i,f_{\mathbf{a}_i})\}_{i=1}^{N}$, and its log-likelihood is
\begin{equation}
   \ell(\gamma,\boldsymbol{\theta})
   =
   \sum_{i=1}^{N}\left[
      \gamma\boldsymbol{\theta}^{\prime}\mathbf{a}_i
      -\log\left(\sum_{f_{\mathbf{b}}\in B_i}\exp\{\gamma\boldsymbol{\theta}^{\prime}\mathbf{b}\}\right)
   \right].
\end{equation}
Standard errors can be obtained from the inverse observed information, by a sandwich covariance matrix, or by bootstrap when the basis is estimated in a first stage. Wald, likelihood-ratio, or minimum-distance tests then compare the unrestricted projected representation with SEU, RDU, CPT, or finite-state benchmarks.

\tab This procedure yields testable restrictions. The normalized weights must satisfy $p_j\geq0$ and $\sum_{j=1}^{J}p_j=1$. Monotonicity in monetary outcomes requires that, holding state-feature exposure fixed, higher payoff paths are weakly preferred. SEU is obtained only when the weights can be replaced by a state-independent probability vector that is separable from local utility. RDU and CPT impose additional rank-order and gain-loss restrictions on the normalized weights. The abstract Wiener model can therefore be rejected in a finite experiment if the estimated positive-part weights are unstable across equivalent acts, if monotonicity fails after normalization, if the implied weights violate the maintained rank or gain-loss ordering, or if additional Wiener basis features have no out-of-sample explanatory content relative to a finite-state benchmark.

\tab The empirical role of the raw signed weights is diagnostic rather than directly normative. A negative $v_j$ indicates a local signed deviation from the Gaussian reference measure in the $j$-th feature direction. It is not a negative probability and is not used as a choice probability. For admissible choice comparisons, the model uses $p_j$ or another specified nonnegative normalization. This is why the representation can accommodate signed Wiener integrals without violating the usual monotonicity requirement for preferences over monetary outcomes.

\begin{thm}[Finite projected representation and elicitation]\label{thm:FiniteProjectedElicitation}
   Fix a truncation level $J$ and linearly independent state features $\{s_{1j}\}_{j=1}^{J}$. Let the admissible act class be
   \[
      \mathcal{A}_J
      =
      \left\{
      f_{\mathbf{a}}(x)=\sum_{j=1}^{J}a_js_{1j}(x): \mathbf{a}\in A\subset\mathbb{R}^{J}
      \right\},
   \]
   where $A$ is convex and has nonempty relative interior. Let $\succeq$ be a preference relation on $\mathcal{A}_J$. Suppose:
   \begin{enumerate}[label=(A\arabic*)]
      \item $\succeq$ is complete, transitive, and continuous in the loading vector $\mathbf{a}$;
      \item if $a_j\geq b_j$ for all coordinates that represent monetary payoff improvements and $\mathbf{a}\neq\mathbf{b}$, then $f_{\mathbf{a}}\succ f_{\mathbf{b}}$;
      \item for each feature $j$, there is a local utility scale $u_j$ such that tradeoffs across projected acts are separable in the feature loadings;
      \item admissible decision weights are normalized nonnegative transforms of the raw Wiener weights, so that $p_j\geq0$ and $\sum_{j=1}^{J}p_j=1$.
   \end{enumerate}
   Then there exist local utility coefficients $(u_1,\ldots,u_J)$ and normalized decision weights $(p_1,\ldots,p_J)$ such that preferences on $\mathcal{A}_J$ are represented by
   \[
      V_J(f_{\mathbf{a}})=\sum_{j=1}^{J}p_j u_j a_j
      =\boldsymbol{\theta}^{\prime}\mathbf{a},
      \qquad
      \boldsymbol{\theta}=(\theta_1,\ldots,\theta_J)^{\prime},\quad \theta_j=p_ju_j .
   \]
   Moreover, if the observed design matrix of feature loadings has rank $J$, then the composite coefficient vector $\boldsymbol{\theta}$ is identified from certainty-equivalent, pricing, or binary-choice observations on $\mathcal{A}_J$ up to the usual location and scale normalization of utility. If the local utility scale $(u_1,\ldots,u_J)$ is independently elicited or normalized with $u_j\neq0$ for all active coordinates, then the normalized decision weights are identified by
   \[
      p_j=\frac{\theta_j/u_j}{\sum_{k=1}^{J}\theta_k/u_k}.
   \]
   The finite projected representation is rejected on a design if the recovered weights fail nonnegativity, fail normalization, or imply a violation of monotonicity on the ordered act class. \qed
\end{thm}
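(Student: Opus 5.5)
The plan is to reduce the statement to the finite-dimensional case of \cref{thm:InfiniteProjectedRepresentation,thm:InfiniteIdentification}, taking $\mathcal H=\mathbb R^J$ with the loading coordinates. Linear independence of $\{s_{1j}\}_{j=1}^J$ makes $\mathbf a\mapsto f_{\mathbf a}$ a bijection from $A$ onto $\mathcal A_J$, so $\succeq$ can be pulled back to $A$. One technical point comes first. For the affine representation I would shift $A$ so that an interior point is the origin; location is immaterial because $\mathbf a\mapsto\boldsymbol\theta'\mathbf a$ is only determined up to an additive constant.

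\textbf{Step 1: additive representation.} By (A1) we have a continuous weak order on a convex set in $\mathbb R^J$. I would read (A3) as supplying the separability/independence content that replaces Axiom~\ref{ax:HilbertIndependence}. Concretely, I would assume (A3) means that the marginal tradeoff in coordinate $j$ does not depend on the other loadings, and is linear in $a_j$ with slope $u_j$. This gives a representation $V_J(\mathbf a)=\sum_j c_j a_j$. If (A3) only delivers additive separability $\sum_j\phi_j(a_j)$, I would invoke the Debreu additive-representation theorem, which needs at least three essential coordinates. Linearity of each $\phi_j$ would then come from the local-utility interpretation $\phi_j(a_j)=u_ja_j$. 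This is the main obstacle, because the hypotheses as stated are informal. I would make the linearity explicit as the content of (A3) rather than try to derive it.

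\textbf{Step 2: factorization and monotonicity.} Set $\theta_j=c_j$. Whenever $\sum_k\theta_k/u_k>0$, define
\[
p_j=\frac{\theta_j/u_j}{\sum_k\theta_k/u_k}.
\]
This gives $\theta_j=p_ju_j'$, where $u_j'$ is $u_j$ rescaled by a common positive constant. That constant is absorbed by the utility scale normalization. Existence of such $p$ with $p_j\ge0$ is the content of (A4). By (A2), every monetary-improvement coordinate must have $\theta_j>0$: otherwise a unit increase in $a_j$ inside the relative interior of $A$ would fail to produce strict preference. This is exactly the finite analogue of $\theta\in K^*$ in \cref{thm:InfiniteProjectedRepresentation}, with $K$ the orthant of improvement coordinates.

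\textbf{Step 3: identification.} For binary or Luce data \eqref{eq:LuceBinaryChoice} with known strictly increasing link, apply \cref{thm:InfiniteIdentification}. The design operator is $Q=\mathbb E[DD']$, and a rank-$J$ design matrix of loading differences makes $Q$ positive definite, hence injective. Therefore $\gamma\boldsymbol\theta$ is identified, and $\boldsymbol\theta$ is identified up to scale. For certainty-equivalent or pricing data, observe linear equations $\boldsymbol\theta'\mathbf a_i=\text{const}+\text{scale}\cdot y_i$. Full column rank then gives a unique least-squares solution up to location and scale.

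\textbf{Step 4: weights and rejection.} Given independently elicited $u_j\neq0$, the displayed formula for $p_j$ follows from $\theta_j=p_ju_j$ and $\sum_jp_j=1$, exactly as in \cref{thm:InfiniteIdentification}. The rejection clause is the contrapositive of Steps 1 and 2. If the representation held, the recovered $p$ would be nonnegative, would sum to one, and would imply $\theta_j>0$ on improvement coordinates. Any design exhibiting a violation of one of these therefore contradicts the maintained hypotheses (A2) or (A4).
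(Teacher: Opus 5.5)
Your proposal is correct and takes the same route as the paper's proof. Both go from (A1)--(A3) straight to the linear index $\boldsymbol{\theta}'\mathbf{a}$, use (A4) to write $\theta_j=p_ju_j$ and (A2) to sign the improvement coefficients, identify $\boldsymbol{\theta}$ from a rank-$J$ design via $M\boldsymbol{\theta}=\mathbf{y}$ or a single-index link (your appeal to injectivity of $Q$ in \Cref{thm:InfiniteIdentification} is the explicit form of the latter), and read off $p_j$ from $\sum_jp_j=1$. Building linearity into (A3) makes explicit a step the paper takes without argument, and for binary data your rank condition on the loading \emph{differences} is the one actually needed, since only $\boldsymbol{\theta}'(\mathbf{a}-\mathbf{b})$ enters the choice probabilities.
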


\begin{proof}
   Each act in $\mathcal{A}_J$ is identified with its loading vector $\mathbf{a}\in A\subset\mathbb{R}^J$. Completeness, transitivity, and continuity give a continuous numerical representation on the finite projected act class. The separability condition restricts the representation to be additive across feature loadings after the local utility scales are fixed. Hence there exists a composite coefficient vector $\boldsymbol{\theta}=(\theta_1,\ldots,\theta_J)^{\prime}$ such that
   \[
      V_J(f_{\mathbf{a}})=\sum_{j=1}^{J}\theta_ja_j=\boldsymbol{\theta}^{\prime}\mathbf{a}.
   \]
   The normalization condition in (A4) decomposes each admissible composite coefficient as $\theta_j=p_ju_j$, where $p_j\geq0$ and $\sum_jp_j=1$. Monotonicity in (A2) rules out coefficient configurations that make a coordinatewise payoff improvement lower the represented value.

   For elicitation, let $M$ denote the matrix whose rows are the observed loading vectors. Certainty-equivalent or pricing observations identify the vector $\boldsymbol{\theta}$ from $M\boldsymbol{\theta}=\mathbf{y}$ whenever $\operatorname{rank}(M)=J$; in an overidentified design, the same coefficient vector is identified by the corresponding least-squares or minimum-distance normal equations. In binary-choice data, a standard single-index specification identifies $\boldsymbol{\theta}$ up to the scale of the link function. If $u_j$ is separately elicited or normalized and nonzero on the active coordinates, then $\theta_j/u_j$ identifies $p_j$ up to a common positive factor, and imposing $\sum_jp_j=1$ gives the displayed formula. Since raw Wiener weights are signed, nonnegativity, normalization, and monotonicity are not automatic mathematical consequences of the Wiener integral. They are admissibility restrictions on the decision-theoretic representation, so their empirical failure rejects the finite projected model on the maintained design.
\end{proof}

\begin{cor}[SEU, RDU, and CPT restrictions]\label{cor:SEURDUCPTRestrictions}
   Under \Cref{thm:FiniteProjectedElicitation}, the following restrictions are testable on the finite projected design.
   \begin{enumerate}[label=(\roman*)]
      \item SEU requires the normalized weights to be representable by a state-probability vector that is separable from local utility.
      \item RDU requires the normalized weights to be generated by an increasing weighting function applied to ordered cumulative probabilities.
      \item CPT requires a reference point and separate admissible gain and loss weights, with monotone cumulative decision weights on each side of the reference point.
   \end{enumerate}
   Failure of the relevant restriction rejects the corresponding SEU, RDU, or CPT submodel without necessarily rejecting the broader abstract-Wiener projected representation.
\end{cor}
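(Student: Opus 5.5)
The plan is to derive each of (i)--(iii) from the representation in \Cref{thm:FiniteProjectedElicitation}. Each submodel will be recast as an extra restriction on the identified composite vector $\boldsymbol{\theta}$, or on the recovered weights $(p_1,\ldots,p_J)$. I would then check that the restriction is a proper subset of the admissible set allowed by (A4). Concretely, the first step invokes the identification part of \Cref{thm:FiniteProjectedElicitation}: under the rank-$J$ design and an independently elicited local utility scale, $p_j=(\theta_j/u_j)/\sum_k\theta_k/u_k$ is a known function of population choice data. Any restriction of the form $\mathbf{p}\in\mathcal P_{\text{sub}}\subsetneq\Delta^{J-1}$ is therefore a restriction on observables, which is what ``testable'' means here. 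Rejection of the submodel then follows because an estimate of $\mathbf{p}$ outside $\mathcal P_{\text{sub}}$ (or a significant Wald or likelihood-ratio statistic) contradicts $\mathcal P_{\text{sub}}$. It does not contradict $\Delta^{J-1}$, so the broader projected representation survives.

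Next I would make each restriction explicit.

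For (i), SEU requires $p_j=\pi(s_j)$ for a probability vector $\pi$ that does not vary with the local utility scale. The implication I would test is invariance: the recovered $\mathbf{p}$ must be the same across designs that change $u$ (for instance by different elicited scales, or by menus where utility loadings shift) while holding state exposure fixed. Because \Cref{thm:InfiniteIdentification} shows $(p_jc_j,u_j/c_j)$ are observationally equivalent without external normalization, I would state the test conditional on the elicited $u$. That is exactly the separability clause.

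For (ii), after ordering coordinates by the rank of the payoff loadings, RDU requires $p_j=w(Y_j)-w(Y_{j-1})$ with $w$ increasing, $w(0)=0$, $w(1)=1$. This gives the testable inequalities that cumulative recovered weights be monotone in rank and consistent across menus sharing the same ranking.

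For (iii), I would split the coordinates at a reference point into gain and loss blocks. Applying the (ii) argument separately on each side, with weighting functions $w^+$ and $w^-$, yields monotone cumulative weights on each side.

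The main obstacle is (i). SEU separability is not a pointwise inequality on a single $\mathbf{p}$. It is a cross-design invariance restriction, and it is only meaningful because $u$ is pinned down externally. I would handle this by defining the SEU null as equality of the recovered $\mathbf{p}$ across two or more designs with distinct elicited $u$ profiles, which is a finite set of equality constraints. For (ii) and (iii), the only care needed is that the rank ordering be fixed by the design, so that the cumulative sums are well defined. Finally, I would note non-nesting in the reverse direction. A vector $\mathbf{p}\in\Delta^{J-1}$ violating any of these restrictions exists whenever $J\ge2$, for example a nonmonotone cumulative profile for RDU. So each restriction has content, and the failure of one rejects only the corresponding submodel.
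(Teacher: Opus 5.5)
Your architecture matches the paper's proof. That proof says only three things: \Cref{thm:FiniteProjectedElicitation} identifies $\boldsymbol{\theta}$ and, with $u$ fixed, $\mathbf{p}$; SEU, RDU, and CPT impose extra structure on that recovered vector; and a violation therefore rejects the submodel but not the broader representation. You also note that SEU separability has no bite on a single vector of $\Delta^{J-1}$, so it must be posed as cross-design invariance. That is sharper than the paper's ``same recovered vector'' wording. It works only if the variation in $u$ is substantive, for example a different consequence range, which changes the Fourier coefficients $u_j$. It cannot be a change of elicited scale: by \Cref{thm:InfiniteIdentification}, a coordinatewise rescaling $u_j\mapsto u_j/c_j$ moves the recovered $p_j$ mechanically and would reject SEU spuriously.

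The step that fails is your content check for (ii) and (iii).
\begin{itemize}
\item Under (A4) every admissible $\mathbf{p}$ has $p_j\geq0$, so $Y_j=p_1+\cdots+p_j$ is nondecreasing in rank for every $\mathbf{p}\in\Delta^{J-1}$. The ``nonmonotone cumulative profile'' you cite as a violating example does not exist, and the inequalities you propose to test already follow from (A4).
\item On one fixed ordering with $w$ free, every $\mathbf{p}\in\Delta^{J-1}$ is RDU-representable: take base probabilities equal to $\mathbf{p}$ and $w$ the identity.
\item Even with known base probabilities $q$ and $Q_j=q_1+\cdots+q_j$, an increasing $w$ can interpolate the points $(Q_j,Y_j)$, so only a support condition remains.
\item The monotone-cumulative clause on each side of the reference point in (iii) is vacuous for the same reason.
\end{itemize}
So fixing the rank order, which you call the only care needed, is exactly what removes the content. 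Your ``consistency across menus sharing the same ranking'' is also already implied, within a design, by the single $\boldsymbol{\theta}$ of \Cref{thm:FiniteProjectedElicitation}. Its failure would therefore reject the broader representation, not just RDU.

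The repair is the device you already used for SEU. Fix the base probabilities externally, for instance the $P_\star$-probabilities or design-specified probabilities of the ranked events. Then recover $\mathbf{p}$ on several comonotonic designs with different orderings, and require one increasing $w$ to generate all of them. For example, take $J=2$ and $q_1<q_2$. Recover $\mathbf{p}^{(1)}$ with coordinate $1$ ranked first and $\mathbf{p}^{(2)}$ with coordinate $2$ ranked first. RDU then forces $p^{(1)}_1=w(q_1)\leq w(q_2)=p^{(2)}_2$, which admissible vectors can violate. For CPT, the analogous content comes from moving a coordinate across the reference point while requiring fixed $w^{+}$ and $w^{-}$.
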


\begin{proof}
   The theorem identifies the finite composite coefficients and, when the utility scale is fixed, the normalized decision weights. SEU, RDU, and CPT impose additional structure on those weights. SEU imposes separability of beliefs from local utility; RDU imposes rank-ordered cumulative weighting; and CPT adds reference dependence together with distinct gain and loss weighting functions. These restrictions are overidentifying restrictions on the same recovered finite vector of normalized weights. A violation therefore rejects the restricted submodel while leaving open the broader signed-Wiener representation with a different admissible normalization.
\end{proof}

\subsection{Luce menu-choice Monte Carlo}\label{subsec:JMELuceSimulation}
\tab This subsection conducts a repeated-sample Monte Carlo analysis of the Luce choice rule in \eqref{eq:LuceMenuChoice}--\eqref{eq:LuceBinaryChoice}. The exercise evaluates both identification and finite-sample recovery of the nonnegative normalization. Each menu contains three projected acts represented by four state-feature loadings on $s_{1,1}$, $s_{1,3}$, $s_{1,5}$, and $s_{1,7}$. The local utility scale is fixed at $u=(1.10,0.85,0.55,0.30)$, while the raw Wiener-integral weights are allowed to be signed: $v=(0.55,-1.15,0.25,0.40)$. The admissible weights are the positive-part normalized weights $p_j=v_j^+ / \sum_k v_k^+$.

\tab Choices are generated from the Luce rule
\begin{equation}\label{eq:JMELuceSimProb}
   \Pr(f_{\mathbf{a}}\mid B)
   =
   \frac{\exp\{\gamma\boldsymbol{\theta}^{\prime}\mathbf{a}\}}
   {\sum_{f_{\mathbf{b}}\in B}\exp\{\gamma\boldsymbol{\theta}^{\prime}\mathbf{b}\}},
   \qquad \theta_j=p_ju_j,
\end{equation}
with $\gamma=2.40$. The unrestricted projected model estimates $\boldsymbol{\theta}$ by maximizing the finite-menu likelihood. A simple finite-state benchmark restricts normalized feature weights to be equal and estimates only a common scale. For each of $N\in\{1{,}800,5{,}000,10{,}000,50{,}000\}$ menus, we generate 1,000 independent datasets, re-estimate both models, and compute bias, root mean squared error (RMSE), Hessian-based 95\% coverage, sign recovery, classification accuracy, and the log-likelihood advantage per menu. The design therefore contains 4,000 estimations and approximately 66.8 million simulated menus.

\begin{table}[!htb!]
\centering
\caption{Repeated-sample performance of the Luce menu-choice estimator}\label{tab:JMELuceSummary}
{\small\begin{tabular}{rrrrrrrr}
\hline
$N$ & Reps. & Conv. & $|$Bias$|$ & RMSE & 95\% cov. & Hit rate & $\Delta LL/N$\\
\hline
 1,800 & 1000 & 1.0000 & 0.0312 & 0.0434 & 0.9463 & 0.5327 & 0.0493\\
 5,000 & 1000 & 1.0000 & 0.0181 & 0.0248 & 0.9505 & 0.5322 & 0.0486\\
10,000 & 1000 & 1.0000 & 0.0128 & 0.0176 & 0.9473 & 0.5323 & 0.0487\\
50,000 & 1000 & 1.0000 & 0.0057 & 0.0078 & 0.9473 & 0.5320 & 0.0484\\
\hline
\end{tabular}

}
\par\vspace{0.35em}
\begin{minipage}{0.90\textwidth}
\footnotesize\textit{Notes:} Each row summarizes 1,000 independent Monte Carlo
replications with three alternatives per menu. Bias and RMSE refer to the four
normalized weights. Coverage averages the four Hessian-based 95\% intervals
for the unrestricted choice coefficients. Hit rate is the unrestricted
model's modal-choice accuracy. $\Delta LL/N$ is its average log-likelihood
advantage per menu over the re-estimated equal-weight benchmark.
\end{minipage}
\end{table}

\begin{table}[!htb!]
\centering
\caption{Recovery of normalized decision weights at 50,000 menus}\label{tab:JMELuceWeights}
{\small\begin{tabular}{lrrrrrr}
\hline
Feature & Raw $v_j$ & True $p_j$ & Mean $\widehat p_j$ & RMSE($\widehat p_j$) & True $\theta_j$ & Mean $\widehat\theta_j$\\
\hline
$s_{1,1}$ & 0.5500 & 0.4583 & 0.4577 & 0.0081 & 0.5042 & 0.5040\\
$s_{1,3}$ & -1.1500 & 0.0000 & 0.0021 & 0.0038 & 0.0000 & -0.0001\\
$s_{1,5}$ & 0.2500 & 0.2083 & 0.2076 & 0.0075 & 0.1146 & 0.1143\\
$s_{1,7}$ & 0.4000 & 0.3333 & 0.3326 & 0.0103 & 0.1000 & 0.0999\\
\hline
\end{tabular}

}
\par\vspace{0.35em}
\begin{minipage}{0.90\textwidth}
\footnotesize\textit{Notes:} Entries are averages across 1,000 independent
replications with 50,000 three-alternative menus each. Raw coefficients
\(v_j\) may be signed. True normalized weights satisfy
\(p_j=v_j^+/\sum_k v_k^+\), and \(\theta_j=p_ju_j\). The negative raw weight
on \(s_{1,3}\) therefore has a true admissible probability weight of zero.
\end{minipage}
\end{table}

\tab \Cref{tab:JMELuceSummary} shows stable repeated-sample behavior. All 4,000 estimations converge. Average 95\% coverage ranges from 94.63\% to 95.05\%, close to its nominal level, and every replication recovers the signs of the three nonzero choice coefficients. Normalized-weight RMSE declines from 0.0434 at $N=1{,}800$ to 0.0078 at $N=50{,}000$. Across sample sizes, the projected model attains a modal hit rate near 53.2\%, compared with about 48.8\% for the equal-weight benchmark, and improves log likelihood by approximately 0.048--0.049 per menu.

\begin{figure}[!htb!]
\centering
\begin{subfigure}{0.48\textwidth}
   \centering
   \includegraphics[width=\linewidth]{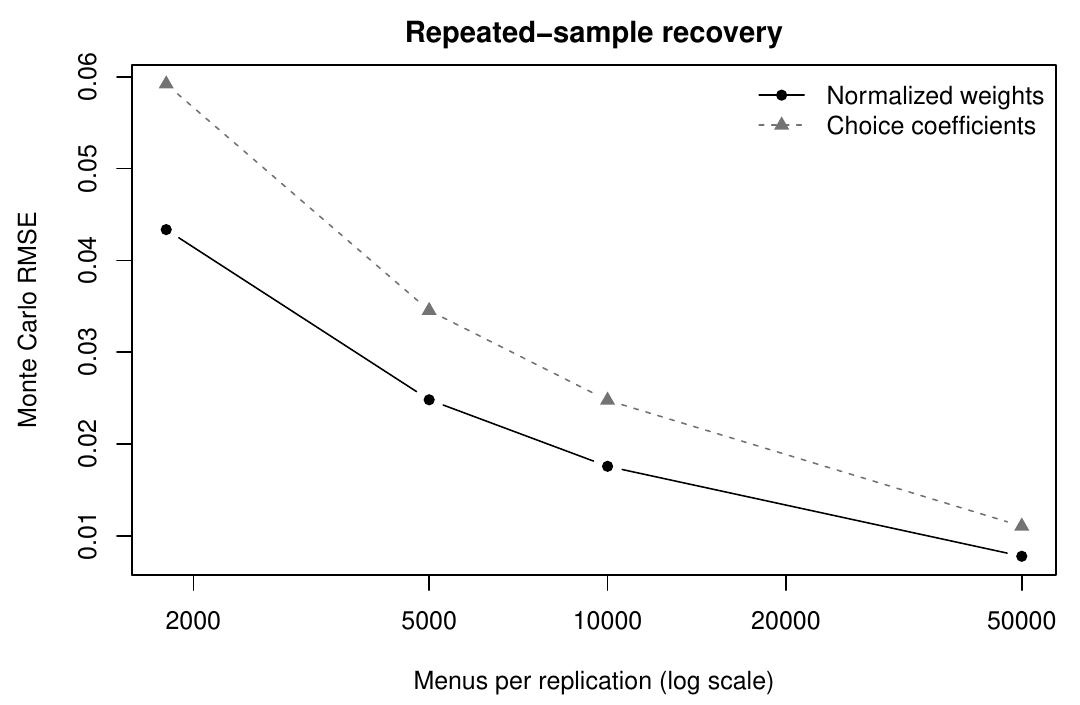}
   \caption{Monte Carlo RMSE by sample size.}
   \label{fig:JMELuceWeights}
\end{subfigure}\hfill
\begin{subfigure}{0.48\textwidth}
   \centering
   \includegraphics[width=\linewidth]{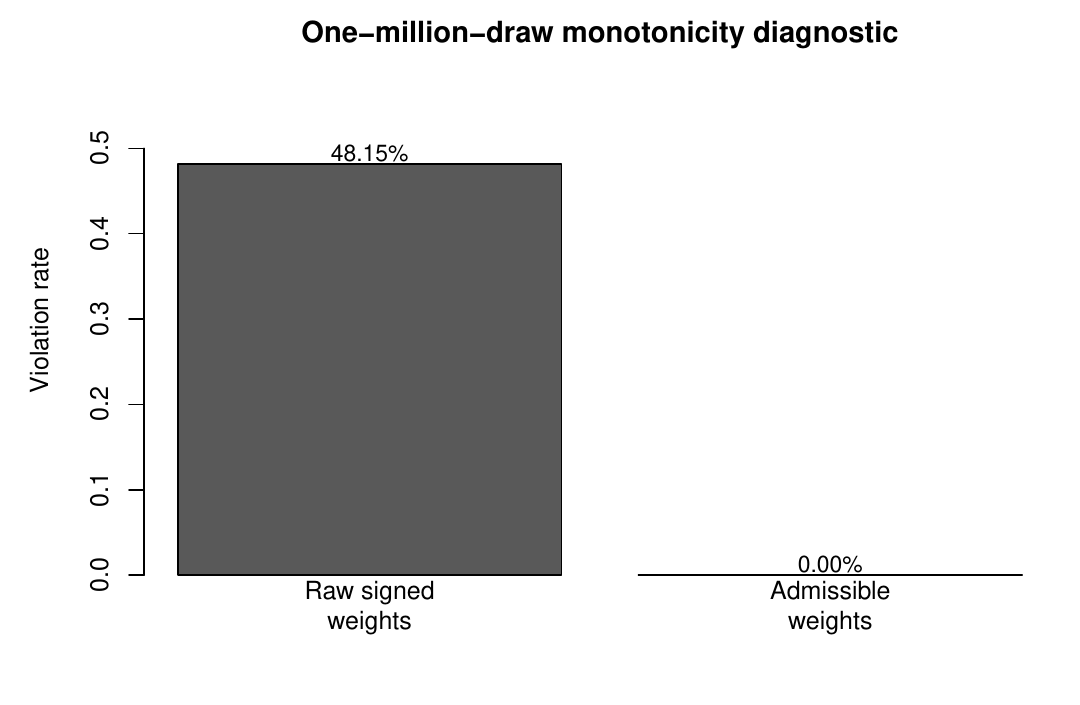}
   \caption{Monotonicity diagnostic.}
   \label{fig:JMELuceMonotonicity}
\end{subfigure}
\caption{Repeated-sample recovery and admissibility diagnostics.}\label{fig:JMELuceSimulation}
\par\vspace{0.35em}
\begin{minipage}{0.92\textwidth}
\footnotesize\textit{Notes:} The left panel plots Monte Carlo RMSE for the
normalized weights and choice coefficients across four sample sizes, using
1,000 independent replications per size. The right panel reports monotonicity
violations in one million independently generated coordinatewise payoff
improvements.
\end{minipage}
\end{figure}

\tab \Cref{tab:JMELuceWeights} and \cref{fig:JMELuceSimulation} show that the estimator recovers the economically relevant normalized weights rather than the raw signed coefficients. At $N=50{,}000$, the true weights and their Monte Carlo means are
\[
  p_{\mathrm{true}}\approx(0.4583,0,0.2083,0.3333),
  \qquad
  \mathbb E[\widehat p]\approx(0.4577,0.0021,0.2076,0.3326).
\] The one-million-draw diagnostic records a 48.15\% monotonicity-violation rate for raw signed weights and no violations for admissible weights. The small positive Monte Carlo mean for the second normalized weight reflects sampling error followed by the positive-part transformation; its true value is zero because the corresponding raw Wiener weight is negative. Thus raw Wiener-integral weights may be signed, but Luce choice probabilities and monotone preferences require admissible nonnegative decision weights.

\section{Conclusion}\label{sec:Conclusion}
\tab The economic significance of this paper is that it separates a general preference-and-identification result from a particular stochastic-path implementation. On a separable real Hilbert space, the axioms characterize a continuous linear projected-local functional, normalization gives uniqueness, and an injective information operator identifies its coefficient sequence from repeated choices. Economically, finite menus can therefore recover an infinite-dimensional local preference functional when their differences span the maintained parameter space sufficiently richly.

\tab Abstract Wiener space is the principal application rather than a hidden premise of that theorem. Many economically important objects---including lifetime income and consumption, investment payoffs, insurance losses, and streams of uncertain policy consequences---unfold across time or a continuum of contingencies. The Wiener specialization makes such paths the acts over which preferences are defined, embeds their local coordinates in a Cameron--Martin Hilbert space, and supplies a Gaussian benchmark measure. Hermite functions are coordinates of acts, not objects of choice, and the Gaussian measure is not an elicited Savage belief. Raw Wiener-integral weights may be signed; economically admissible decision weights require the explicit nonnegative normalization studied in the paper.

\tab The repeated-sample Luce experiment gives a finite empirical interpretation of this distinction. Across 4,000 estimations, it obtains near-nominal coverage, diminishing recovery error as menu samples grow, and stable gains over an equal-weight benchmark. The one-million-draw diagnostic confirms that raw signed Wiener weights can violate monotonicity, whereas their admissible normalization does not.

\tab The scope of this result differs from the complete ergodic characterization
in \citet{LuSaito2026RepeatedChoice}. Their necessary-and-sufficient axioms and
uniqueness result identify a dynamic random-utility process from repeated-choice
frequencies. \Cref{thm:InfiniteProjectedRepresentation} instead characterizes
the continuous linear projected-local functional on path acts, and
\Cref{thm:InfiniteIdentification} identifies its composite coefficient sequence
under an injective repeated-choice design. Separate recovery of utility and
normalized decision weights still requires the additional normalization stated
in that theorem.

\tab This distinction yields testable content. Local utility is an abstract Fourier coefficient, i.e., an orthogonal projection of outcome utility on basis features of payoff paths. Local nonlinear weights are Wiener integrals relative to the canonical Gaussian reference measure. These raw weights are generally signed random variables; nonnegative probability weights require taking positive parts and normalizing. \mycite{Savage1972} SEU need not hold on abstract Wiener space because local utility and local weights can both be state dependent, while SEU separates beliefs from tastes. Additive RDU decision weights also require normalization once the underlying weights are Wiener functionals; after centering, the fluctuations have the law-of-the-iterated-logarithm scale $O_p(\left(2\sigma_n^2\log\log \sigma^2_n\right)^{-1/2})$ where $\sigma^2_n$ is the sample variance of centered decision weights.

\tab The decision-theoretic interpretation is more restrictive than the raw signed representation. Acts, not states, are the objects of choice. Hermite functions are state-feature coordinates. Raw Wiener weights are signed local changes of measure, while admissible choice weights are nonnegative normalized transforms. With these restrictions in place, the model can be elicited in finite-dimensional experiments by presenting projected acts, estimating local utility coefficients, and testing whether normalized weights satisfy monotonicity, additivity, and the rank restrictions associated with SEU, RDU, or CPT.

\tab Among other things, decision making in abstract Wiener space also poses a challenge to deterministic transitivity. Because preferences are evaluated through stochastic normalized weights, preference reversals can occur with positive probability. The relevant empirical question is not whether every realization is transitive, but whether the estimated normalized representation has stable out-of-sample predictive content relative to finite-state expected utility or standard RDU/CPT benchmarks. It would be interesting to see what additional restrictions must be imposed on the topology and on the admissible normalized weights for stronger transitivity properties to hold.

 \tab Our model extends naturally to path dependent utility, i.e. utility function of Wiener process on a probability space $(\Omega,\mathcal{F},P)$ with the Ito integral representation \citep[p.~1.9]{Oksendal1997}  $V(W_t)=E[V]+\displaystyle \int\varphi(t,\omega)\,dW(t,\omega)$ where $\varphi(t,\omega)\in \widetilde{L}^2_H$. For the functional of several Wiener processes $V(W_1(t,\omega),\ldots,W_n(t,\omega))$  the utility representation falls in the Clark-Ocone-Haussmann integral representation and under certain conditions $\varphi(t,\omega)=E\left[D_tV|\mathcal{F}_t]\right]$ where $D_tV$ is a Malliavin derivative of $V$ and outside the scope of this paper \citep[cf.][p.~46]{Nualart2006}; \citep{Mataramvura2012}.

\singlespacing
\vspace*{-0.80cm}\section*{}
\addcontentsline{toc}{section}{References}\vspace*{-1cm}
\bibliographystyle{chicago}        %
\bibliography{RankDepRiskEst}         %
\end{document}